\newtheorem{theorem}{Theorem}
\newtheorem{example}{Example}%
\newtheorem{remark}{Remark}%
\newtheorem{definition}{Definition}%
\newtheorem{lemma}{{{\textit{Lemma}}}}
\newtheorem{corollary}{{{{\textit{Corollary}}}}}
\newtheorem{claim}{{{\textit{Claim}}}}
\newtheorem{case}{{{\textit{Case}}}}
\begin{document}

\title[Construction of SNC-ZCZ Sequences with Flexible Support]{Construction of Spectrally-Null-Constrained Zero-Correlation Zone Sequences with Flexible Support}


\author[1]{Nishant Kumar}\email{nishant\_1921ma08@iitp.ac.in}

\author[2]{Palash Sarkar}\email{palash.sarkar@uib.no}

\author*[3]{Sudhan Majhi}\email{smajhi@iisc.ac.in}

\affil[1]{Department of Mathematics, IIT Patna, Patna, 801106, Bihar, India}

\affil[2]{Department of Informatics, Selmar Center, University of Bergen, N-5020 Bergen, Norway}

\affil*[3]{Department of Electrical Communication Engineering, IISC, Bangalore, 560012, Karnataka, India}


\abstract{In recent years, traditional zero-correlation zone (ZCZ) sequences are being studied due to support interference-free quasi-synchronous code division multiple access (QS-CDMA) systems. However, in cognitive radio (CR) network, it is desirable to design ZCZ sequences having spectral null constraint (SNC) property to achieve low spectral density profile. This paper focuses on the construction of SNC-ZCZ sequences having flexible support, where support refers to a collection of indices corresponding to non-zero entries in the sequence. The proposed SNC-ZCZ sequences reduce to traditional ZCZ sequences when the size of the support becomes equal to the length of the sequence. To obtain ZCZ sequences, we first propose construction of traditional/SNC-Complete complementary codes (SNC-CCCs) using a class of extended Boolean functions (EBFs). With the help of this class we propose another class of EBFs that generates asymptotically optimal traditional/SNC-ZCZ sequences of prime-power lengths with respect to Tang-Fan-Matsufuzi bound. Furthermore, a relation between the second-order cosets of first-order generalized Reed-Muller code and the proposed traditional ZCZ sequences is also established.}
 
\keywords{Quasi-synchronous code division multiple access (QS-CDMA), spectrally null constraint (SNC), complete complementary code (CCC), extended Boolean functions (EBFs), ZCZ sequences, generalized Reed-Muller (GRM) code.}



\maketitle

\section{Introduction}\label{sec1}
The construction of sequences possessing favorable correlation properties assumes a crucial significance in the realm of wireless communications and radar sensing. Such sequences find widespread utility across various applications, including but not limited to channel estimation, synchronization, spread spectrum communications, surveillance, and ranging \cite{fan1996sequence}. 
The polyphase sequences having zero-correlation zone (ZCZ) properties have been found suitable for application in quasi-synchronous-code division multiple access (QS-CDMA) system to provide interference free environment. A ZCZ is characterized by a region of zero auto- and cross-correlations centered around the in-phase timing position. These traditional ZCZ sequences (having non-zero entries) assume that there are continuous spectral bands available, so the energy of the sequence can be distributed across all the carriers within that spectral band. Since there's a limited and valuable amount of available radio spectrum for wireless communication and radar sensing, researchers are trying to find better ways to use it efficiently \cite{zhao2007survey}. This emerging concept is referred to as cognitive radio (CR) \cite{haykin2005cognitive} or cognitive radar \cite{haykin2006cognitive}. A significant challenge in cognitive radio (CR) systems is achieving interference-free multiple access communications with a low spectral density profile and robust anti-jamming capability. In pursuit of this goal, a potential strategy involves the utilization of spread-spectrum CDMA techniques, employing sequences possessing the ZCZ property. Exploiting the correlation attribute of ZCZ sequences, a QS-CDMA system utilizing ZCZ sequences as its spreading codes can achieve interference-free performance, contingent on the condition that all received signals align with the ZCZ \cite{suehiro1994signal}. In the literature, ZCZ sequences were first proposed in \cite{fan1996sequence}, and have been well developed by many researchers \cite{ChWu,LiGuPa,TaFa,Th,ZhDz,ZhPa}. Nonetheless, traditional ZCZ sequences encounter a challenge when applied directly in CR systems. This arises due to their design presumption, which typically assumes the presence of the entire spectral band, rather than the specific non-contiguous spectral bands dictated by the spectrum hole constraint within a CR system. Imposing this constraint for spectral nulling jeopardizes the ZCZ property within a traditional ZCZ sequence set, leading to potential loss or degradation of its intended benefits. In this paper, sequences with constraints of spectral nulls are referred
to as spectrally-null-constrained (SNC) sequences. The term "support" pertains to a set of indices that correspond to elements within the sequence that have non-zero values. Therefore, SNC-Sequences can also be defined as the sequences whose size of the support is not equal to the length of the sequence. Additionally, if the size of the support equals to the length of the sequence, then the sequence is referred to as a traditional sequence.
 
In the realm of systematic SNC sequence construction, the pioneering work of Tsai \emph{et al.} \cite{tsai2011lower} introduced the initial SNC-ZCZ family by utilizing a comb-like sequence distribution in the frequency domain. In 2018, Liu \emph{et al.} \cite{liu2018spectrally} achieved SNC sequences with minimal autocorrelation values through carefully selected ternary frequency-domain duals possessing zero periodic autocorrelation sidelobes. A recent contribution by Tian \emph{et al.} \cite{tian2020family} presented an analytical construction of SNC sequences in single-channel polyphase configurations. The work by Li \emph{et al.} \cite{li2022spectrally} proposed the construction of both SNC-ZCZ sequences and SNC-Z-periodic complementary set (SNC-ZPCS) sequences. Furthermore, in a recent study, Ye \emph{et al.} \cite{ye2022new} introduced a method for constructing SNC-ZCZ sequences based on a circular Florentine rectangle and an interleaving approach.

This paper is driven by the limited availability of design approaches for SNC-ZCZ sequences in the existing literature. Due to their mathematical properties, Extended Boolean functions (EBFs) are widely recognized in the literature for their ability to generate sequences with flexible parameters. This motivates us to study EBFs as a way to overcome the current limitations in achieving flexible parameters for traditional/SNC-ZCZ sequences. In this study, we first propose a class of EBFs to generate traditional/SNC-CCC which further been extended to another class of EBFs to achieve traditional/SNC-ZCZ sequences of parameter $(p^t,(p-1)p^n,p^{n+t+1})$. The proposed ZCZ sequences has length in the form of prime-power and alphabet set $\mathbb{Z}_p$. Furthermore, the proposed traditional/SNC-ZCZ sequences are asymptotically optimal as the obtained parameters attain the Tang-Fan-Matsufuzi bound asymptotically. Moreover, we also establish a relationship between traditional ZCZ sequences and GRM codes. The remaining sections of this paper are structured as follows.

Section 2 presents some basic definitions and notations which will be used throughout the paper. The proposed constructions of traditional/SNC-CCC and traditional/SNC-ZCZ sequences are provided in Section 3 and Section 4 respectively. Section 5 establishes the relation between proposed traditional ZCZ sequences and GRM codes. In Section 6 we compare the proposed construction with some existing construction. Lastly, we conclude the paper in Section 7.

\section{Preliminary}\label{sec2}

\subsection{Extended Boolean Functions (EBF) and Corresponding Sequences}
For any $b \geq 2$, an extended Boolean function (EBF) $f(\mathbf{x})$ is defined as a function $f: \mathbb{Z}_b^t \rightarrow$ $\mathbb{Z}_b$, where $\mathbf{x}=\left(x_1, x_2, \ldots, x_t\right)$ is the $b$-ary representation of the integer $x=\sum_{k=1}^t x_k b^{k-1}$. For a given EBF $f$, let $f_i=f\left(i_1, i_2, \ldots, i_t\right)$, and define the $\mathbb{Z}_b$-valued sequence $\xi(f)$ of length $b^t$ associated with $f$ as
\begin{equation}
    \xi(f)=\left[f_0, f_1, \ldots, f_{b^t-1}\right],
\end{equation}
and complex valued sequence associated with $f$ as
\begin{equation}
    \psi(f)=[\omega^{f_0}, \omega^{f_1}, \hdots, \omega^{f_{b^{\color{black}t}-1}}],\label{eq:12}
\end{equation}
where $\omega=\exp(2\pi\iota/b)$.

\subsection{Definition and Correlation Functions}
Let $\boldsymbol{a}=\left(a_0, a_1, \cdots, a_{N-1}\right)$ and $\boldsymbol{b}=\left(b_0, b_1, \cdots, b_{N-1}\right)$ be two complex-valued sequences of equal length $L$. For an integer $u$, define
\begin{equation}\label{equ:cross}
\gamma(\boldsymbol{a}, \boldsymbol{b})(u)=
\begin{cases}
\sum_{i=0}^{L-1-u}a_{i}b^{*}_{(i+u)}, & 0 \leq u < L,
\\
\sum_{i=0}^{L+u-1} a_{(i-u)}b^{*}_{i}, & -L< u < 0.  
\end{cases}
\end{equation}
The functions $\gamma(\boldsymbol{a}, \boldsymbol{b})$ is called aperiodic cross-correlation function (ACCF) of $\boldsymbol{a}$ and $\boldsymbol{b}$. Moreover, if $\boldsymbol{a}=\mathbf{b}$, then this function is called an aperiodic auto-correlation function (AACF).
\par {\color{black}Further, the periodic cross-correlation function (PCCF) of $\boldsymbol{a}$ and $\boldsymbol{b}$ is defined as
\begin{equation}\label{equ:periodic}
\phi(\boldsymbol{a}, \boldsymbol{b})(u)=
\begin{cases}
\sum_{i=0}^{L}a_{i}b^{*}_{(i+u)\mod~L}, & 0 \leq u < L,
\\
\sum_{i=0}^{L} a_{(i-u)\mod~L}b^{*}_{i}, & -L< u < 0.  
\end{cases}
\end{equation}
When $\boldsymbol{a}=\boldsymbol{b}$, then this function is called periodic auto-correlation function (PACF). We can also define PCCF in terms of ACCF as}
\begin{equation} \label{eq7} 
{\color{black}\phi}(\boldsymbol{a},\boldsymbol{b})(u)=\gamma(\boldsymbol{a},\boldsymbol{b})(u)+\gamma^*(\boldsymbol{b},\boldsymbol{a})(L-u).
\end{equation} 

For a sequence $\boldsymbol{a}$, its polynomial representation can be given by 
\begin{equation}
    p_{\boldsymbol{a}}(z)=a_0+a_1z+\hdots+a_{N-1}z^{N-1},
\end{equation}
where $z\in \{e^{j2\pi t:~0\leq t<1}\}$ is a complex number. Further. the following property is easy to derive
\begin{equation}
    \arrowvert p_{\boldsymbol{a}}(z)\arrowvert^2=~ \gamma(\boldsymbol{a})(0)+\sum_{u=1}^{N-1}{(\gamma_{\boldsymbol{a}}(u)z^{-u}+\gamma^*_{\boldsymbol{a}}(u)z^{u})}. 
\end{equation}
 \begin{definition}
Let $\mathbf{C}=\{\mathbf{C}_0,\mathbf{C}_1, \hdots ,\mathbf{C}_{K-1}\}$ be a collection of $K$ matrices (codes) of order $M\times L$ . Define $\mathbf{C_{\mu}}=[(\mathbf{a}_{0}^{\mu})^T,(\mathbf{a}_{1}^{\mu})^T,\hdots,(\mathbf{a}_{M-1}^{\mu})^T]^T$, where $\mathbf{a}_\nu^\mu$ ($0\leq \nu \leq M-1,0 \leq \mu \leq K-1$) is the $\nu$th row sequence or $\nu$th constituent sequence and $T$ is transpose operator. Then the ACCF of two codes $\mathbf{C_{\mu_1}},\mathbf{C_{\mu_2}}\in \textbf{C}$ is defined as
\begin{equation}
    \gamma(\mathbf{C_{\mu_1}},\mathbf{C_{\mu_2}})(u)=\sum_{\nu=0}^{M-1}{\gamma(\mathbf{a}_{\nu}^{\mu_1}, \mathbf{a}_{\nu}^{\mu_2})(u)}.
\end{equation}
\end{definition}
\begin{definition}
  Let \textbf{C} be a code set defined as above and satisfying the following correlation properties,
   \begin{equation}
       \gamma(\mathbf{C_{\mu_1}},\mathbf{C_{\mu_2}})(u)=
      \begin{cases}
        {\color{black}LM}, & u=0,\ \mu_1=\mu_2,\\
        0,  & 0<|u|<L,\ \mu_1=\mu_2, \\
        0,  & |u|<L, \mu_1\neq \mu_2.
      \end{cases}
    \end{equation}  
  Then $\textbf{C}$ is said to be a $(K,M,L)$-MOGCS. Moreover, $\textbf{C}$ is said to be CCC set if $K=M$ and we denote it as $(K,K,L)$-CCC. Also, each code of $\textbf{C}$ is called a GCS \cite{SaMaLi}.
\end{definition}

For a GCS $\mathbf{C_{\mu}}$, following property holds \cite{Shen2022construction}
\begin{equation}
   \arrowvert p_{\mathbf{a}_{0}^{\mu}}(z)\arrowvert^2+\arrowvert p_{\mathbf{a}_{1}^{\mu}}(z)\arrowvert^2+\hdots+\arrowvert p_{\mathbf{a}_{M-1}^{\mu}}(z)\arrowvert^2=~\text{Constant}. 
\end{equation}
\begin{definition}
  Let $\boldsymbol{\mathscr{Z}}=\{\mathbf{z}_0,\mathbf{z}_1,\hdots,\mathbf{z}_{N-1}\}$ be a set of $N$ sequences of length $L$ each, i.e., $\mathbf{z}_i=(\mathbf{z}_{i0},\mathbf{z}_{i1},\hdots,\mathbf{z}_{iL-1}),\ \ 0\leq i\leq N-.$ Then, $\boldsymbol{\mathscr{Z}}$ is said to be $(N,Z,L)$-ZCZ sequence set if for $0\leq i,j\leq N-1$, ${\boldsymbol{\mathscr{Z}}}$ satisfies following,
  \begin{equation}
      \phi(\mathbf{z}_i,\mathbf{z}_j)(u)=
      \begin{cases}
       0, & i=j\ \text{and}\ 1\leq |u|\leq Z,\\
       0, & i\neq j \ \text{and}\ 0\leq |u|\leq Z,\\
       L, & i=j\ \text{and}\ u=0,
       \end{cases}
  \end{equation}
  where $Z$ is called ZCZ width.
\end{definition} 

\begin{definition}\emph{(Tang-Fan-Matsufuji Bound \cite{TaFaMa})}
Let $\boldsymbol{\mathscr{Z}}$ be any ZCZ sequence set with parameter $(N,Z,L)$. Then, {$N(Z+1)\leq L.$} If $\boldsymbol{\mathscr{Z}}$ achieves the Tang-Fan-Matsufuji Bound with equality then $\boldsymbol{\mathscr{Z}}$ is said to be optimal. Further, if for the larger value of $N$, $N(Z+1)\approx L$, then $\boldsymbol{\mathscr{Z}}$ is said to be asymptotically optimal. However, for the binary case, it is widely accepted that the bound is reduced to $2NZ\leq L$ {\cite{matsu}}. 
   \label{def4}
\end{definition}
\subsection{Restricted and Squeezed Vector/Sequence}
Let $p$ be a prime and $f: \mathbb{Z}_p^n \rightarrow$ $\mathbb{Z}_p$ be an EBF with $n$ variables $x_1, x_2, \cdots, x_n$. Let $J=\left(j_1, j_2, \cdots, j_k\right)$ be a list of $k$ indices with $1 \leq j_1<\cdots<j_k \leq n$ and write $\mathbf{x}_J=$ $\left(x_{j_1}, x_{j_2}, \cdots, x_{j_k}\right)$. Let $\mathbf{c}=\left(c_1, c_2, \cdots, c_k\right)$ be a $p$-ary vector of length $k$. Then we define the restricted complex valued sequence, $\psi(\left.f\right|_{\mathbf{x}_J=\mathbf{c}})$, corresponding to $\left.f\right|_{\mathbf{x}_J=\mathbf{c}}$ with component $i=\sum_{j=1}^m i_j p^{j-1}$ equal to $\omega^{f_i}$ if $i_{j_\alpha}=c_\alpha$ for $1 \leq \alpha \leq k$, and equal to 0 otherwise. The indices of the non-zero entries in the restricted vector is given by
$$
i=\sum_{\alpha=0}^{k-1} c_\alpha p^{j_\alpha}+\sum_{j \notin J} i_j p^j, \quad i_j\in \mathbb{Z}_p.
$$
It is clear that restricted sequences contains zero entries. So, If we remove all the zeros from a restricted sequence then we call remaining sequence as a squeezed sequence. It is clear that squeezed sequence has length $p^{n-k}$.

\subsection{Spectrally-Null-Constrained (SNC) Sequences \cite{shen2023constructions}}
Let $\boldsymbol{a}=\left(a_0, a_1, \cdots, a_{N-1}\right)$ be a complex-valued sequence, the support of $a$ is $\Omega=\left\{x \in \mathbb{Z}_N: a_x \neq 0\right\}$. If $\Omega^c=\left\{x \in \mathbb{Z}_N: x \notin \Omega\right\}$ is not an empty set, $\boldsymbol{a}$ is called a spectrally null constrained (SNC) sequence; otherwise, it is called a traditional sequence. ZCZ sequences such that each sequence is SNC are called SNC-ZCZ sequences. Correspondingly, if there is at least one SNC sequence in the CCC, the CCC is referred to as an SNC-CCC.

\subsection{Generalised Reed-Muller (GRM) Code}
For a prime $p$, consider the polynomial ring $\mathbb{F}_p[x_1,x_2,\hdots,x_l]$ with $l$ variables where $\mathbb{F}_p$ is a field with $p$ elements. GRM codes with parameters $l$ and $\vartheta$ consist of all the sequences corresponding to polynomials with $l$ variables and degree no larger than $\vartheta$.
\begin{definition}
The $\vartheta$-th order $\vartheta\leq p$, $p$-ary GRM code, $GRM_p(l,\vartheta)$, is defined as the following set of $p$-ary vectors
\begin{equation}
GRM_p(l,\vartheta)=\{\xi(f):f\in\mathbb{F}_p[x_1,x_2,\hdots,x_l],~deg(f)< \vartheta\} .
\end{equation}
\end{definition}
Furthermore, it is noteworthy that we always have $x_i^p=x_i$ in $\mathbb{F}_p$, so we only need to consider the polynomials in which the degree of each $x_i$ is no larger than $p-1$. All such polynomials with a degree no larger than $\vartheta$ are linear combinations of the following set of monomials
\begin{equation}
    \mathcal{N}(r)=\Big\{x_1^{j_1}x_2^{j_2},\hdots x_l^{j_l}:j=0,1,2\hdots,p-~\text{and}~\sum_{i=1}^{l}{j_i}\leq \vartheta\Big\}.
\end{equation}
Using combinatorics, we can always prove that the number of monomials in the set is $\mid\mathcal{N}\mid=\sum_{d=0}^{\vartheta}{{l-1+d\choose d}}$. It is well known that $\mathbb{F}_p[x_1,x_2,\hdots,x_l]$ forms a vector space over the field $\mathbb{F}_p$. Further, it can be proved that $GRM_p(l,\vartheta)$ is its subspace. Therefore, $GRM_p(l,\vartheta)$ is a linear code with code length $p^l$ and code dimension $\sum_{d=0}^{\vartheta}{{l-1+d\choose d}}$. Moreover, if we arrange sequences corresponding to monomials in $\mathcal{N}$ as the rows of a matrix, then this matrix forms a generator matrix of $GRM_p(l,\vartheta)$.

\begin{lemma}[\cite{KaSh}]
   For $\lambda=p^n$ and $n\geq 1$, minimum Hamming distance of $GRM_{\lambda}(m,r)$ is $(R+1).p^Q,$ where $R$ is the remainder and $Q$ is the quotient resulting from dividing $m(p-1)-r$ by $p-1$.\label{l0}  
\end{lemma}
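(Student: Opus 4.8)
We sketch how the statement --- the classical determination of the minimum distance of a generalized Reed--Muller code --- is obtained. Throughout, $GRM_{\lambda}(m,r)$ is the order-$r$ code over the field $\mathbb{F}_{\lambda}$, and we identify a codeword with the evaluation vector $\xi(f)$ of a polynomial $f\in\mathbb{F}_{\lambda}[x_1,\dots,x_m]$ of total degree at most $r$ in which every variable occurs with exponent at most $\lambda-1$; such a reduced representative is unique, so the minimum distance equals $\lambda^{m}$ minus the maximum number of zeros in $\mathbb{F}_{\lambda}^{m}$ of a nonzero such polynomial. The plan is to prove the matching upper and lower bounds. First I would record the arithmetic bridge between the two ways of splitting the order: writing $r=(\lambda-1)Q_0+R_0$ with $0\le R_0\le\lambda-2$, one checks in one line that the quotient $Q$ and remainder $R$ of $m(\lambda-1)-r$ divided by $\lambda-1$ satisfy $(R+1)\lambda^{Q}=(\lambda-R_0)\lambda^{m-1-Q_0}$ when $R_0\ge1$ and $(R+1)\lambda^{Q}=\lambda^{m-Q_0}$ when $R_0=0$. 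Both bounds are then matched against this closed form.

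For the upper bound I would produce an explicit codeword of exactly this weight:
\[
 f(x_1,\dots,x_m)=\Big(\prod_{i=1}^{Q_0}\big(x_i^{\lambda-1}-1\big)\Big)\Big(\prod_{j=1}^{R_0}\big(x_{Q_0+1}-\beta_j\big)\Big),
\]
where $\beta_1,\dots,\beta_{R_0}$ are distinct elements of $\mathbb{F}_{\lambda}$ (the second product being empty when $R_0=0$). Its total degree is $Q_0(\lambda-1)+R_0=r$, so $\xi(f)\in GRM_{\lambda}(m,r)$; and since $x_i^{\lambda-1}-1$ vanishes precisely off $x_i=0$, the support of $\xi(f)$ consists of the points with $x_1=\dots=x_{Q_0}=0$, $x_{Q_0+1}\notin\{\beta_1,\dots,\beta_{R_0}\}$ and all other coordinates free. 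Counting these points gives weight $(\lambda-R_0)\lambda^{m-1-Q_0}$ (respectively $\lambda^{m-Q_0}$), which by the identity above is exactly $(R+1)\lambda^{Q}$; hence the minimum distance is at most $(R+1)\lambda^{Q}$.

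For the lower bound --- the substantive half --- I would induct on the number of variables $m$. The base case $m=1$ is immediate: a nonzero univariate polynomial of degree at most $r\le\lambda-1$ has at most $r$ roots, so its evaluation vector has weight at least $\lambda-r=(R+1)\lambda^{0}$. For the inductive step, given a nonzero reduced $f$ of degree $\le r$, write $f=\sum_{i=0}^{\lambda-1}x_m^{i}f_i(x_1,\dots,x_{m-1})$, let $s$ be the largest index with $f_s\neq0$ (so $\deg f_s\le r-s$), and restrict $f$ to the $\lambda$ parallel hyperplanes $x_m=a$, $a\in\mathbb{F}_{\lambda}$; equivalently, on each of the $\lambda^{m-1}$ lines $\{(\mathbf{b},a):a\in\mathbb{F}_{\lambda}\}$, $\mathbf{b}\in\mathbb{F}_{\lambda}^{m-1}$, the function $f$ becomes a univariate polynomial in $a$ of degree at most $s$ with leading coefficient $f_s(\mathbf{b})$. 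Turning the inductive hypothesis for $f_s$ (a nonzero polynomial in $m-1$ variables of degree $\le r-s$) into a line-by-line lower bound for the weight of $\xi(f)$ --- a line over $\mathbf{b}$ with $f_s(\mathbf{b})\neq0$ contributes at least $\lambda-s$ nonzeros --- and combining it with the complementary estimate coming from the lower-order part of $f$, one arrives at $(R+1)\lambda^{Q}$. I expect this bookkeeping to be the main obstacle: the recursion must close with the \emph{exact} constant (a crude polynomial count of the Schwartz--Zippel type is not enough), and the argument splits according to whether $R_0=0$ or $R_0\ge1$ --- that is, whether a carry occurs when $r$ is divided by $\lambda-1$ --- with a separate verification that the weakest of the resulting estimates is still $(R+1)\lambda^{Q}$. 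A route that sidesteps the induction is to use that $GRM_{\lambda}(m,r)$ is affine-invariant, hence an extended cyclic code, and read off the minimum distance from a BCH-type bound on its defining set of exponents; but this reduces to essentially the same extremal problem on $\lambda$-ary digit expansions. Matching the two bounds yields the value asserted in the statement.
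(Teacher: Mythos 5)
The paper does not prove this lemma at all --- it is quoted from \cite{KaSh} as a known result --- so there is no internal proof to compare against; your proposal is an outline of the classical Kasami--Lin--Peterson argument, and its skeleton is sound. Your arithmetic bridge is correct (writing $r=(\lambda-1)Q_0+R_0$, one gets $Q=m-Q_0$, $R=0$ when $R_0=0$ and $Q=m-Q_0-1$, $R=\lambda-1-R_0$ when $R_0\ge 1$, so $(R+1)\lambda^{Q}$ equals $(\lambda-R_0)\lambda^{m-1-Q_0}$ resp.\ $\lambda^{m-Q_0}$), and your upper-bound codeword $\prod_{i\le Q_0}(x_i^{\lambda-1}-1)\prod_{j\le R_0}(x_{Q_0+1}-\beta_j)$ has exactly that weight, so that half is complete. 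The lower bound is where you stop at a sketch; note, however, that the bookkeeping you worry about does close without any ``complementary estimate from the lower-order part'': with $s$ the top $x_m$-degree, the single inequality $\mathrm{wt}(f)\ge(\lambda-s)\,d_{m-1,r-s}$ (lines with $f_s(\mathbf b)\neq 0$ restrict to univariate polynomials of exact degree $s$), together with the trivial case $s=0$ where $\mathrm{wt}(f)=\lambda\,\mathrm{wt}(f_0)$, already dominates $(R+1)\lambda^{Q}$ in both regimes $s\le R_0$ and $s>R_0$ --- the verification is the two elementary inequalities $(\lambda-s)(\lambda-R_0+s)\ge\lambda(\lambda-R_0)$ and $(\lambda-s)(s-R_0+1)\ge \lambda-R_0$. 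Spelling that out would turn your sketch into a full proof.

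One point of care: the statement as printed in the paper mixes bases --- it sets $\lambda=p^{n}$ but expresses the distance as $(R+1)p^{Q}$ with $Q,R$ obtained from $m(p-1)-r$. The classical theorem, and the version you actually argue for, is over $\mathbb{F}_{\lambda}$ with everything in base $\lambda$ (distance $(R+1)\lambda^{Q}$ from dividing $m(\lambda-1)-r$ by $\lambda-1$); the two agree only for $n=1$, which is the only case the paper uses (its corollary invokes $GRM_p(n',1)$ with minimum distance $(p-1)p^{n'-1}$). You silently corrected this; it is worth stating explicitly that you prove the $\lambda$-base formula and that the paper's formula should be read with $\lambda=p$.
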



\section{Proposed Construction of SNC-CCC}
In this section, we propose a class of EBFs which leads to the constructions of SNC-CCC. 

Let $n\geq 1, k< n, t\leq n-k$. For $J=\{j_1,j_2,\hdots,j_k\} \subset \{1,2,\hdots,n\}$ and $1\leq j_1 <j_2<\hdots<j_k\leq n$, let us assume $\mathbf{x}_J=(x_{j_1},x_{j_2},\hdots,x_{j_k})$. Consider the set $\{1,2,\hdots,n-k\}$ being partitioned into $t$ sets, namely, $V_1,V_2,\hdots,V_t$. Further, let $\pi_r$ be a bijective mapping from $\{1,2,\hdots,m_r\}$ to $V_r$ where $m_r=|V_r|\geq 1,~\forall r=1,2,\hdots,t.$ Define an EBF $f:\mathbb{Z}_p^n\rightarrow \mathbb{Z}_p$ as
\begin{equation}\label{f}
        f\mid_{x_J=\mathbf{c}}=\sum_{r=1}^{t}{\sum_{\alpha=1}^{m_r-1}{x_{\pi_r(\alpha)}x_{\pi_r(\alpha+1)}}}+\sum_{j=1}^{n}{d_{\alpha}x_{\alpha}},
    \end{equation}
where $d_{\alpha}\in \mathbb{Z}_{p}$. Further, let $c\in \mathbb{Z}_{p^k}$ such that $\mathbf{c}=(c_1,c_2,\hdots,c_k)$ is the $p$-ary representation of $c$. Now, for $k=0$, we present a result from \cite{SaLiMa}.
\begin{lemma}
    For each $0\leq h \leq p^t-1$, consider the ordered set given by
    \begin{equation}    \label{palash-CCC}
    C_h=\left\{\mathbf{c}_h^s\equiv\psi\left(f+\sum_{r=1}^{t}{x_{\pi_r(1)}s_{r}}+\sum_{r=1}^{t}{x_{\pi_r (m_r)} h_r}\right) : s_r\in \mathbb{Z}_p\right\},
\end{equation}
where $(h_1,h_2,\hdots,h_t)$ and $(s_1,s_2,\hdots,s_t)$ are vector representation of $h$ and $s$ respectively. Then, $\mathscr{C}=\{C_h:0\leq h\leq p^t-1\}$ is a $(p^t,p^t,p^n)$-CCC.
\end{lemma}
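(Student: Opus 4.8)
\emph{Proof strategy.} The plan is to bypass the aperiodic summation range and the carries in $i\mapsto i+u$ by working with generating polynomials. Recall from the excerpt that on $|z|=1$ one has $p_{\boldsymbol{a}}(z)\,\overline{p_{\boldsymbol{b}}(z)}=\sum_{u}\overline{\gamma(\boldsymbol{b},\boldsymbol{a})(u)}\,z^{u}$, so a Laurent polynomial of this shape is constant in $z$ exactly when every correlation with $u\neq0$ vanishes. By Definition~1, $\gamma(C_{h_1},C_{h_2})(u)=\sum_{s\in\mathbb{Z}_p^{t}}\gamma(\mathbf{c}_{h_1}^{s},\mathbf{c}_{h_2}^{s})(u)$, and since $\gamma(C_h,C_h)(0)=p^{t}\cdot p^{n}$, the three defining identities of a $(p^{t},p^{t},p^{n})$-CCC are together equivalent to the single statement
\[
\sum_{s\in\mathbb{Z}_p^{t}}p_{\mathbf{c}_{h_1}^{s}}(z)\,\overline{p_{\mathbf{c}_{h_2}^{s}}(z)}=p^{\,n+t}\,\delta_{h_1,h_2}\qquad\text{for all }|z|=1 .
\]
Thus the whole proof reduces to this identity.

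First I would factor $p_{\mathbf{c}_h^{s}}(z)$ across the parts of the partition. With $k=0$, the set $\{1,\dots,n\}$ is the disjoint union of $V_1,\dots,V_t$, and the quadratic part of $f$, the linear part $\sum_{j}d_j x_j$, and the offsets $s_r x_{\pi_r(1)}+h_r x_{\pi_r(m_r)}$ all restrict on $V_r$ to a function $G_r$ of the digits $\{i_k:k\in V_r\}$ only. Writing $i=\sum_k i_k p^{k-1}$ and setting $\zeta_k:=z^{p^{k-1}}$, so that $z^{i}=\prod_k\zeta_k^{i_k}$, the sum over $i$ factors into independent sums over $(i_{V_1},\dots,i_{V_t})$ and yields
\[
p_{\mathbf{c}_h^{s}}(z)=\prod_{r=1}^{t}Q_r^{s_r,h_r}(z),\qquad Q_r^{a,b}(z):=\sum_{x\in\mathbb{Z}_p^{m_r}}\omega^{\,G_r(x;a,b)}\prod_{\alpha=1}^{m_r}\zeta_{\pi_r(\alpha)}^{\,x_\alpha}.
\]
Hence $\sum_{s}p_{\mathbf{c}_{h_1}^{s}}(z)\,\overline{p_{\mathbf{c}_{h_2}^{s}}(z)}=\prod_{r=1}^{t}\big(\sum_{a\in\mathbb{Z}_p}Q_r^{a,(h_1)_r}(z)\,\overline{Q_r^{a,(h_2)_r}(z)}\big)$, and since $\sum_r(m_r+1)=n+t$ and $\prod_r\delta_{(h_1)_r,(h_2)_r}=\delta_{h_1,h_2}$, it is enough to prove for one path (of length $m=m_r$) that $\sum_{a\in\mathbb{Z}_p}Q_r^{a,b_1}(z)\,\overline{Q_r^{a,b_2}(z)}=p^{\,m+1}\delta_{b_1,b_2}$ whenever $|z|=1$.

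Next I would evaluate $Q_r^{a,b}$ by the transfer-matrix method. Let $T$ be the $p\times p$ matrix with $T_{cd}=\omega^{cd}$, and let $D_k$ be diagonal with $(D_k)_{cc}=\omega^{\,d_k c}\zeta_k^{\,c}$. Summing over $x_1,x_2,\dots$ in turn identifies
\[
Q_r^{a,b}(z)=\big(T\,D_{\pi_r(1)}\,T\,D_{\pi_r(2)}\,T\cdots T\,D_{\pi_r(m)}\,T\big)_{ab}=(T\,P_r\,T)_{ab},\qquad P_r:=D_{\pi_r(1)}\,T\,D_{\pi_r(2)}\cdots T\,D_{\pi_r(m)},
\]
the two outer factors $T$ arising from the endpoint offsets $a\,x_1$ and $b\,x_m$. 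Then $\sum_{a}Q_r^{a,b_1}\,\overline{Q_r^{a,b_2}}=\big((TP_rT)^{\dagger}(TP_rT)\big)_{b_1 b_2}$. Now $T^{\dagger}T=pI$ by a character sum, and because $|z|=1$ each $\zeta_k$ is unimodular, so each $D_k$ is unitary and $T/\sqrt{p}$ is unitary; hence $P_r$ is $p^{(m-1)/2}$ times a unitary matrix and $P_r^{\dagger}P_r=p^{\,m-1}I$. Substituting, $(TP_rT)^{\dagger}(TP_rT)=T^{\dagger}P_r^{\dagger}(T^{\dagger}T)P_rT=p\,T^{\dagger}P_r^{\dagger}P_rT=p^{\,m}\,T^{\dagger}T=p^{\,m+1}I$, which is exactly the required identity; unwinding the reductions shows that $\mathscr{C}$ is a $(p^{t},p^{t},p^{n})$-CCC.

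No single step is hard; the three load-bearing ideas are (i) recasting the CCC conditions as one polynomial identity, which is what eliminates the bookkeeping of the aperiodic range and of carries; (ii) the factorization over the parts $V_r$, where the hypotheses that $f$ is a sum of disjoint path quadratics and that the offset variables $s_r,h_r$ sit at the path endpoints get used; and (iii) the transfer-matrix representation of a path quadratic form as a product of (scaled) unitary matrices. I expect the main obstacle to be point (iii): one must check that the $D_k$ are unitary precisely because $|z|=1$, and that the two extra boundary factors $T$ — one produced by summing over the $s_r$, one by the $h_r$-offsets — are what upgrade ``constant in $z$'' to the Kronecker delta $\delta_{b_1,b_2}$; a stray index in setting up the transfer-matrix product is the only real hazard.
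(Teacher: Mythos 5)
Your proof is correct, but be aware that the paper never proves this lemma itself: it is imported as a known result from \cite{SaLiMa}, and the arguments in that line of work (and in the style the paper uses later, e.g.\ for Theorem \ref{SNC-ZCZ}) are carried out digit-wise on the extended-Boolean domain, fixing a shift $u$ and cancelling paired contributions via character sums along the paths $V_r$. Your route is genuinely different and self-contained: you recast the three CCC conditions as the single Laurent-polynomial identity $\sum_{s} p_{\mathbf{c}_{h_1}^{s}}(z)\,\overline{p_{\mathbf{c}_{h_2}^{s}}(z)}=p^{\,n+t}\delta_{h_1,h_2}$ on $|z|=1$ (legitimate, since the monomials $z^u$ are linearly independent there and the peak value $LM=p^{n+t}$ comes out right because $\sum_{r}(m_r+1)=n+t$), factor it over the parts $V_1,\dots,V_t$ — this is exactly where the disjointness of the path quadratics and the placement of $s_r$, $h_r$ at the endpoints $x_{\pi_r(1)}$, $x_{\pi_r(m_r)}$ are used — and then evaluate each factor by a transfer-matrix computation in which each $D_k$ is unitary (precisely because $|z|=1$) and $T/\sqrt{p}$ is unitary, so $(TP_rT)^{\dagger}(TP_rT)=p^{\,m_r+1}I$. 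I checked the identification $Q_r^{a,b}=(T\,D_{\pi_r(1)}\,T\cdots T\,D_{\pi_r(m_r)}\,T)_{ab}$, including the degenerate case $m_r=1$ in which $s_r$ and $h_r$ load the same variable; it holds, and the only blemish is the transposed index pair $((TP_rT)^{\dagger}(TP_rT))_{b_1b_2}$ versus $_{b_2b_1}$, which is immaterial since that matrix is a scalar multiple of the identity. What your para-unitary/transfer-matrix viewpoint buys is a uniform, carry-free treatment of all shifts and both auto- and cross-code correlations at once; what the paper's digit-wise style buys is that it carries over with little change to the restricted/SNC setting of Theorem \ref{SNC-CCC}, where the polynomial bookkeeping you use here is combined with the GCS property of the unrestricted codes rather than re-derived from scratch.
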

Now we are in the position to state one of the main results of this paper.
\begin{theorem}
     For each $0\leq h \leq p^t-1$, consider the ordered set given by
    \begin{equation}    \label{SNC-CSS}
    C_h=\left\{\mathbf{c}_h^s\equiv\psi\left(f+\sum_{r=1}^{t}{x_{\pi_r(1)}s_{r}}+\sum_{r=1}^{t}{x_{\pi_r (m_r)} h_r}\Bigg\arrowvert_{\mathbf{x}_J=\mathbf{c}}\right) : s_r\in \mathbb{Z}_p\right\},
\end{equation}
where $(h_1,h_2,\hdots,h_t)$ and $(s_1,s_2,\hdots,s_t)$ are vector representation of $h$ and $s$ respectively. Then, $\mathscr{C}=\{C_h:0\leq h\leq p^t-1\}$ is a $(p^t,p^t,p^n)$-SNC-CCC. Further, the support of each sequence in each code $C_h$ is given by $\Omega=\{\sum_{i=1}^k{c_ip^{j_i-1}}+\sum_{\alpha=1,\alpha\notin J}^{n}{b_{\alpha}p^{\alpha-1}}: b_\alpha\in \mathbb{Z}_p\}$.\label{SNC-CCC}
\end{theorem}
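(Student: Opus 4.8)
The support statement is immediate from the definition of a restricted sequence in Section~2.3: for every $h$ and every $\mathbf s=(s_1,\dots,s_t)$, the sequence $\mathbf c_h^{\mathbf s}=\psi\big(g_h^{\mathbf s}\,\arrowvert_{\mathbf x_J=\mathbf c}\big)$, where $g_h^{\mathbf s}:=f+\sum_{r=1}^{t}x_{\pi_r(1)}s_r+\sum_{r=1}^{t}x_{\pi_r(m_r)}h_r$, carries the nonzero entry $\omega^{g_h^{\mathbf s}(i)}$ at position $i$ precisely when the $p$-ary digits of $i$ at the places indexed by $J$ equal $\mathbf c$, and a $0$ there otherwise; hence its support is exactly the set $\Omega$ of the statement, with $\lvert\Omega\rvert=p^{\,n-k}<p^{n}$ as soon as $k\ge 1$, so each sequence of $\mathscr C$ is SNC. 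What remains is to verify the $(p^{t},p^{t},p^{n})$ SNC-CCC correlation identities. One is tempted to deduce these from Lemma~3 applied to the \emph{squeezed} sequences (which have the same form, now on the $n-k$ free variables, since the path terms use only free variables) followed by a zero-padding argument; but for $p>2$ the zero-padding converts an aperiodic correlation into a sum of \emph{truncated} correlations of the squeezed sequences, which need not vanish, so I would instead argue directly on the length-$p^{n}$ restricted sequences, carrying the constraint $i\in\Omega$ through the computation behind Lemma~3.

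\noindent The first step is the standard collapse over the constituent index $\mathbf s$. For $u\ge 0$ (the case $u<0$ being symmetric),
\[
\gamma(\mathbf C_{h_1},\mathbf C_{h_2})(u)=\sum_{\mathbf s\in\mathbb Z_p^{t}}\ \sum_{\substack{i,\,i+u\in\Omega\\ 0\le i,\,i+u\le p^{n}-1}}\omega^{\,g_{h_1}^{\mathbf s}(i)-g_{h_2}^{\mathbf s}(i+u)}.
\]
Since each $s_r$ appears only in the term $x_{\pi_r(1)}s_r$ and the variables $x_{\pi_1(1)},\dots,x_{\pi_t(1)}$ are distinct (the $V_r$ are disjoint), the $\mathbf s$-summation factors into geometric sums $\sum_{s_r\in\mathbb Z_p}\omega^{\,s_r\,(i_{\pi_r(1)}-(i+u)_{\pi_r(1)})}$, each equal to $p$ when $i_{\pi_r(1)}=(i+u)_{\pi_r(1)}$ and to $0$ otherwise. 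This gives $\gamma(\mathbf C_{h_1},\mathbf C_{h_2})(u)=p^{t}\sum_{i\in S(u)}\omega^{\Phi_{h_1,h_2}(i,u)}$, where $S(u)$ is the set of admissible $i$ with $i,i+u\in\Omega$ and $i_{\pi_r(1)}=(i+u)_{\pi_r(1)}$ for all $r$, and $\Phi_{h_1,h_2}(i,u)$ collects the surviving phase terms (the quadratic path part, the linear $d$-part, and the $h$-linear part).

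\noindent The two in-phase cases are then routine. For $u=0$, $h_1=h_2$: $S(0)=\Omega$, the phase is identically $0$, and the value is $p^{t}\lvert\Omega\rvert=p^{\,n+t-k}$, the correlation peak (that is, $p^{t}$ times the support size). For $u=0$, $h_1\ne h_2$: $S(0)=\Omega$ and $\Phi_{h_1,h_2}(i,0)=\sum_{r}(h_{1,r}-h_{2,r})\,i_{\pi_r(m_r)}$; picking $r_0$ with $h_{1,r_0}\ne h_{2,r_0}$, the free variable $x_{\pi_{r_0}(m_{r_0})}$ appears only in its own summand, so summing $i$ over its $p$ values (all still in $\Omega$) yields $\sum_{v\in\mathbb Z_p}\omega^{(h_{1,r_0}-h_{2,r_0})v}=0$, hence the whole sum vanishes.

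\noindent For $u\ne 0$ and arbitrary $h_1,h_2$ I would mimic the sign-reversing pairing of the $k=0$ proof: pair each $i\in S(u)$ with the index $i'$ obtained by a one-step perturbation of a single free path-variable located just past the least-significant place at which $i$ and $i+u$ disagree, chosen so that $\Phi_{h_1,h_2}$ depends affinely on that coordinate with a slope that is nonzero mod~$p$; the resulting orbits have size $p$ and each contributes $\omega^{c_0}\sum_{v\in\mathbb Z_p}\omega^{c_1 v}=0$ for suitable $c_0,c_1\in\mathbb Z_p$ with $c_1\ne 0$. The genuinely new issue — and the step I expect to be the main obstacle — is that this pairing must send $S(u)$ into itself, whereas the $k=0$ pairing only has to respect $0\le i,i+u\le p^{n}-1$: here both $i'$ and $i'+u$ must additionally keep the prescribed digits at the places indexed by $J$, and because $J$ is \emph{arbitrary} (this is precisely the ``flexible support'' feature) a perturbation of a free digit of $i$ could a priori carry into a $J$-place of $i+u$. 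I expect this to be controlled by the locality of the pairing (it disturbs only a short stretch of digits around a free path-position) together with the fact that membership in $S(u)$ already pins down the digits of $i+u$ at the path-start places $\pi_r(1)$, which simultaneously restricts the shifts $u$ with $S(u)\ne\emptyset$ and caps the carry propagation the pairing can trigger; once $\Omega$-membership is secured, the residual phase bookkeeping coincides with the $k=0$ case. Assembling the three cases shows that $\mathscr C$ is a $(p^{t},p^{t},p^{n})$-SNC-CCC with support $\Omega$.
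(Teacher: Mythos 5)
Your support argument, the collapse over $\mathbf{s}$ (which is legitimate because the $\pi_r(1)$ are free variables), and the two $u=0$ cases are fine; but the substance of the theorem --- vanishing of the code-level aperiodic correlations at \emph{every} shift $0<|u|<p^{n}$ for $h_1=h_2$ and every $|u|<p^{n}$ for $h_1\neq h_2$ --- is exactly the part you leave open. For $u\neq 0$ you propose a digit-perturbation pairing on $S(u)$ ``mimicking the $k=0$ proof,'' and you yourself identify the obstruction: the perturbed index $i'$ must again satisfy $i',\,i'+u\in\Omega$ and the $\pi_r(1)$-digit matching, while a single-digit change in $i$ can, through carries in $i'+u$, alter digits at places in $J$ or at the $\pi_r(1)$ places. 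You offer no rule for choosing the perturbed coordinate, no carry analysis, and no argument that such an $\Omega$-preserving, phase-sign-reversing pairing exists for arbitrary $J$ and arbitrary $u$; you only ``expect'' it to be controlled. As written this is a plan with an acknowledged missing step at its core, so it does not establish the $(p^{t},p^{t},p^{n})$-SNC-CCC property.

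It is also worth knowing that the paper proves the off-peak cases by essentially the reduction you dismissed, organized so that no carry bookkeeping is needed. It renames the free variables, notes via \emph{Lemma}~1 that the restricted function together with the (free) path-end and path-start terms yields a compressed $(p^{t},p^{t},p^{n-k})$-CCC $\mathscr{D}=\{D_h\}$, and represents each restricted sequence by the generating polynomial $\omega^{d_h(\mathbf{y})}z^{K_0+\epsilon(\mathbf{y})}$ with $K_0=\sum_{\alpha}c_\alpha p^{j_\alpha-1}$ and $\epsilon(\mathbf{y})=\sum_{\alpha}y_\alpha p^{i_\alpha-1}$; correlations of the SNC sequences then become sums over pairs $(\mathbf{y},\mathbf{y}')$ of compressed indices, which the paper groups by the compressed shift $y-y'=r$ and annihilates using the complementarity of $\mathscr{D}$ ($\beta_r=0$ for $r\neq 0$). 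Your objection about ``truncated correlations'' explains why naive zero-padding of correlation \emph{values} fails, but the polynomial formulation sidesteps it, because the zero entries simply do not contribute and the support positions are an affine image of the free digits. To complete your proposal you would either have to carry out the $\Omega$-preserving pairing in full detail (including the carry analysis you postponed) or switch to this reduction to the compressed CCC.
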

\begin{proof}
    Let $I=\{1,2,\hdots,n\}\setminus J\equiv \{i_1,i_2,\hdots,i_{n-k}\}$ such that $1\leq i_1<i_2<\cdots<i_{n-k}$ and $\mathbf{x}_I=\{x_{i_1},x_{i_2},\hdots,x_{i_{n-k}}\}$. Now define a transformation of variables as $x_{i_\alpha}\leftrightarrow y_{\alpha}$, which implies $\mathbf{x}_I=(y_1,y_2,\hdots,y_{n-k})\equiv\mathbf{y}$. Further, define $\mathscr{D}=\{D_h:0\leq h \leq p^t-1\}$, where 
    \begin{equation}\label{compressed-CCC}
    D_h=\left\{g+\sum_{r=1}^{t}{y_{r(1)}s_{r}}+\sum_{r=1}^{t}{y_{r (m_r)} h_r} : s_r\in \mathbb{Z}_p\right\},
\end{equation}
where $g(\mathbf{y})=f\arrowvert_{\mathbf{x}_J=\mathbf{c}}(\mathbf{x}_I),~x_{\pi_r(1)}=y_{r(1)}$ and $x_{\pi_r(m_r)}=y_{r(m_r)}$. Then by \emph{Lemma} \ref{palash-CCC}, $\mathscr{D}$ is a $(p^t,p^t,p^{n-k})$-CCC. Let us define a set of polynomials as 
\begin{equation}\label{SNC-Poly}
    \mathscr{P}_h=\left\{\omega^{d_h(\mathbf{y})} z^{K_0+\epsilon(\mathbf{y})}:d_h(\mathbf{y})\in D_h \right\},
\end{equation}
where $K_0=\sum_{\alpha=1}^k{c_\alpha p^{j_{\alpha}-1}}$ and $\epsilon(\mathbf{y})=\sum_{\alpha=1}^{n-k}{y_\alpha p^{i_{\alpha}-1}}$.
Next, we claim following 
\begin{claim}
    $\mathscr{P}_h$ is the polynomial representation of sequences in $C_h$.
\end{claim}
    To prove the claim, we need to prove following
    \begin{enumerate}
        \item Sequences corresponding to polynomials in $\mathscr{P}_h$ and sequences in $C_h$ have non-zero entries at the same position.
        \item Squeezed sequences corresponding to polynomials in $\mathscr{P}_h$ are same as the  squeezed sequences corresponding to sequences in $C_h$.
    \end{enumerate}
    Using \eqref{SNC-CSS}, \eqref{compressed-CCC}, and \eqref{SNC-Poly} both the properties can be inferred directly. Hence $\mathscr{P}_h$ can alternatively be written as
    \begin{equation}
        \mathscr{P}_h=\{p_{\mathbf{c}_h^s}(z):\mathbf{c}_h^s\in C_h\}.
    \end{equation}
    Now, we prove that $\mathscr{P}_h$ is a GCS and a similar calculation can be done to prove $\mathscr{P}_h$ and $\mathscr{P}_{h'}$ are mutually orthogonal for $0\leq h\neq h'\leq p^t-1$. Consider
    \begin{equation}\label{GCS-proof}
    \begin{aligned}
        \sum_{s=0}^{p^t-1}{p_{\mathbf{c}_h^s}(z)p_{(\mathbf{c}_h^s)^*}(-z)}=&\sum_{d_h\in D_h}{\sum_{y,y'=0}^{p^{n-k}-1}{\omega^{d_h(\mathbf{y})-d_h(\mathbf{y}')} z^{\epsilon(\mathbf{y})-\epsilon(\mathbf{y}')}}}\\
        =& \mathscr{L}_1+\mathscr{L}_2,
    \end{aligned}        
    \end{equation}
    where
    \begin{equation}
        \mathscr{L}_1=\sum_{d_h\in D_h}{\sum_{y=y'=0}^{p^{n-k}-1}{\omega^{d_h(\mathbf{y})-d_h(\mathbf{y}')} z^{\epsilon(\mathbf{y})-\epsilon(\mathbf{y}')}}}
    \end{equation}
    and
    \begin{equation}
        \mathscr{L}_2=
        \sum_{d_h\in D_h}{\sum_{y\neq y'}{\omega^{d_h(\mathbf{y})-d_h(\mathbf{y}')} z^{\epsilon(\mathbf{y})-\epsilon(\mathbf{y}')}}}.
    \end{equation}
For $y=y'$, the value of $\omega^{d_h(\mathbf{y})-d_h(\mathbf{y}')} z^{\epsilon(\mathbf{y})-\epsilon(\mathbf{y}')}=1$. Hence, $\mathscr{L}_1=$constant. Next, let $\beta_r=\sum_{d_h\in D_h}{\sum_{y-y'=r}{\omega^{d_h(\mathbf{y})-d_h(\mathbf{y}')}}}$, $1-p^{n-k}\leq r \leq p^{n-k}-1$. Then $\mathscr{L}_2$ can be rewritten as 
\begin{equation}
   \mathscr{L}_2=\sum_{r\neq 0}{\beta_r z^{\epsilon(\mathbf{y})-\epsilon(\mathbf{y}')}}. 
\end{equation}
Since, $d_h\in D_h$, therefore by the property of GCS, $\beta_r=0$, $\forall r$. Also, it can be noted that when $y-y'=r$, then $\mathbf{y}-\mathbf{y}'$ is some fixed vector and hence,
$\epsilon(\mathbf{y}-\mathbf{y}')=\sum_{\alpha=1}^{n-k}{(y_\alpha-y'_\alpha) p^{i_{\alpha}-1}}$ is a constant which implies that $\mathscr{L}_2=0$. Therefore, from \eqref{GCS-proof}
\begin{equation}
    \sum_{s=0}^{p^t-1}{p_{\mathbf{c}_h^s}(z)p_{(\mathbf{c}_h^s)^*}(-z)}=\text{constant}.
\end{equation}
Thus, $\mathscr{P}_h$ is a GCS. 
\end{proof}
\begin{example}\label{SNC-CCCex}
    For $n=6$, $J=\{3\}$, let $\{1,2,4,5,6\}$ is partitioned into $2$ sets, i.e., $V_1=\{1,2\}$ and $V_2=\{4,5,6\}$ with $\pi_1(1)=1,~\pi_1(2)=2,~\pi_2(1)=5,~\pi_2(2)=4,$ and $\pi_2(3)=6$. Let $f:\mathbb{Z}_2^6\rightarrow \mathbb{Z}_2$ be defined as    $f=x_1x_2+x_2x_3+x_3x_4+x_4x_5+x_4x_6$, then $f\arrowvert_{x_3=1}=x_1x_2+x_4x_5+x_4x_6+x_2+x_4$. For $0\leq h\leq 3$, define 
    \begin{equation}
        C_h=\left\{(f+x_1s_1+x_5s_2+x_2h_1+x_6h_2)\arrowvert_{x_3=1}:s_1,s_2\in \{0,1\}\right\}.
    \end{equation}
Then $\mathscr{C}=\{C_0,C_1,C_2,C_3\}$ is a $(4,4,64)$-SNC-CCC as shown in \emph{Table} \ref{tableSNC-CCC}. 
\end{example}
\begin{table}[ht]
\caption{SNC-CCC corresponding to \emph{Example} 
\label{tableSNC-CCC}
\ref{SNC-CCCex}}
\begin{tabular}{|c|l|}
    \hline
       \multirow{4}{*}{$C_0$} &000011-110000-1-11-1000011-11000011-11000011-11000011-11000011-110000-1-11-1\\
       &00001-1-1-10000-111100001-1-1-100001-1-1-100001-1-1-100001-1-1-100001-1-1-10000-1111\\
       &000011-110000-1-11-10000-1-11-10000-1-11-1000011-11000011-110000-1-11-1000011-11\\
       &00001-1-1-10000-11110000-11110000-111100001-1-1-100001-1-1-10000-111100001-1-1-1\\ \hline
       \multirow{4}{*}{$C_1$} & 0000111-10000-1-1-110000111-10000111-10000111-10000111-10000111-10000-1-1-11\\
       & 00001-1110000-11-1-100001-11100001-11100001-11100001-11100001-1110000-11-1-1\\
       & 0000111-10000-1-1-110000-1-1-110000-1-1-110000111-10000111-10000-1-1-110000111-1\\
       & 00001-1110000-11-1-10000-11-1-10000-11-1-100001-11100001-1110000-11-1-100001-111\\ \hline
        \multirow{4}{*}{$C_2$} & 000011-110000-1-11-1000011-11000011-110000-1-11-10000-1-11-10000-1-11-1000011-11\\
        & 00001-1-1-10000-111100001-1-1-100001-1-1-10000-11110000-11110000-111100001-1-1-1\\
        & 000011-110000-1-11-10000-1-11-10000-1-11-10000-1-11-10000-1-11-1000011-110000-1-11-1\\
        & 00001-1-1-10000-11110000-11110000-11110000-11110000-111100001-1-1-10000-1111\\ \hline
        \multirow{4}{*}{$C_3$} & 0000111-10000-1-1-110000111-10000111-10000-1-1-110000-1-1-110000-1-1-110000111-1\\
        & 00001-1110000-11-1-100001-11100001-1110000-11-1-10000-11-1-10000-11-1-100001-111\\
        & 0000111-10000-1-1-110000-1-1-110000-1-1-110000-1-1-110000-1-1-110000111-10000-1-1-11\\
        & 00001-1110000-11-1-10000-11-1-10000-11-1-10000-11-1-10000-11-1-100001-1110000-11-1-1\\ \hline
    \end{tabular}
    \end{table} 

\section{Proposed construction of SNC-ZCZ Sequences}
In the next theorem, a construction of SNC-ZCZ sequence sets is presented. Before presenting the theorem, let us define an EBF $g(x_{n+1},x_{n+2},\hdots,x_{n+t+1})$ on $t+1$ variables as
    \begin{equation}
    \begin{split}      g(x_{n+1},x_{n+2},\hdots,x_{n+t+1})=\sum_{r=2}^{t+1}{c_rx_{n+r}x_{n+1}}+\hspace{-.3cm}\sum_{2\leq \mu<\nu\leq t}\hspace{-.3cm}{d_{\mu \nu}x_{n+\mu} x_{n+\nu}}+\sum_{\beta=1}^{t+1}{e_\beta x_{n+\beta}}+e',\label{eq:6}
    \end{split}
  \end{equation}
  $\text{where} \ c_r,d_{\mu \nu}, e_{\beta},e'\in \mathbb{Z}_p,\ {\color{black}c_{t+1}\neq 0}\ $.
  \begin{remark}
      From \eqref{eq7}, it can be verified that $\mathscr{C}$ is also a CCC with respect to periodic correlation. This idea has been utilized in the proof of \emph{Theorem} \ref{SNC-ZCZ}.
  \end{remark}
\begin{theorem}\label{SNC-ZCZ}
For $0\leq c\leq p^k-1$, define a set $\boldsymbol{\mathscr{Z}}=\{\mathbf{z}_h:\ h=0,1,\hdots,p^t-1\}$ such that
\begin{equation}    \mathbf{z}_h=\psi\left(\left(f+g+\sum_{r=1}^{t}{x_{\pi_r(1)}x_{n+r}}+\sum_{r=1}^{t}{x_{\pi_r (m_r)} h_r}\right)\Bigg\arrowvert_{\mathbf{x}_J=\mathbf{c}}\right),
\end{equation}
where $h=(h_1,h_2,\hdots,h_t)\cdot (1,p,\hdots,p^{t-1})$ and $c=\mathbf{c}\cdot (1,p,\hdots,p^{k-1})$. Then, the set $\boldsymbol{\mathscr{Z}}$ is a $(p^t,(p-1)p^{n},p^{n+t+1})$-SNC-ZCZ. Further, the support of each sequence $\mathbf{z}_h$ is given by $\Omega=\{\sum_{i=1}^k{c_ip^{j_i-1}}+\sum_{\alpha=1,\alpha\notin J}^{n+t+1}{b_{\alpha}p^{\alpha-1}}: b_\alpha\in \mathbb{Z}_p\}$.
\end{theorem}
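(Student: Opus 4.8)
The plan is to obtain $\boldsymbol{\mathscr Z}$ from the SNC-CCC of \emph{Theorem} \ref{SNC-CCC} by reading each $\mathbf z_h$ as a phase-modulated concatenation of the constituent sequences of the code $C_h$, and then to run the periodic-correlation computation indicated in the Remark. The support claim needs no real work: restricting a function by $\mathbf x_J=\mathbf c$ forces the associated complex sequence to vanish at every index whose $J$-digits disagree with $\mathbf c$ and to keep its value elsewhere, exactly as in the proof of \emph{Theorem} \ref{SNC-CCC}; since the ambient number of variables here is $n+t+1$ while $J\subseteq\{1,\dots,n\}$, this is precisely the stated $\Omega$.

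To prove the ZCZ property I would first write an index $i\in\{0,\dots,p^{n+t+1}-1\}$ in mixed radix as $i=\hat i+\widehat{\mathbf j}\,p^{n}$, with $\hat i\in\{0,\dots,p^{n}-1\}$ and $\widehat{\mathbf j}\in\{0,\dots,p^{t+1}-1\}$ having $p$-ary digits $(j_{1},\dots,j_{t+1})$. Substituting $x_{n+r}=j_{r}$ for $1\le r\le t$ and $x_{n+t+1}=j_{t+1}$ turns the generating EBF into $f|_{\mathbf x_J=\mathbf c}+g(j_{1},\dots,j_{t+1})+\sum_{r=1}^{t}x_{\pi_{r}(1)}j_{r}+\sum_{r=1}^{t}x_{\pi_{r}(m_{r})}h_{r}$, which is exactly the exponent defining the constituent sequence $\mathbf c_{h}^{(j_{1},\dots,j_{t})}$ of the code $C_h$ in \eqref{SNC-CSS}, offset by the constant $g(j_{1},\dots,j_{t+1})$. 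Hence $\mathbf z_{h}$ breaks into $p^{t+1}$ consecutive length-$p^{n}$ blocks, the $\widehat{\mathbf j}$-th being $\omega^{g(\mathbf j)}\mathbf c_{h}^{(j_{1},\dots,j_{t})}$. I would also record the one feature of $g$ that matters: its only monomials involving $x_{n+t+1}$ are $c_{t+1}x_{n+t+1}x_{n+1}$ and $e_{t+1}x_{n+t+1}$, so $g(\mathbf j)=j_{t+1}(c_{t+1}j_{1}+e_{t+1})+g_{0}(j_{1},\dots,j_{t})$ with $c_{t+1}\ne 0$ in the field $\mathbb Z_{p}$.

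Next I would evaluate $\phi(\mathbf z_{h},\mathbf z_{h'})(u)$ for $0\le u\le(p-1)p^{n}$ (the negative range follows from $\phi(\mathbf z_{h},\mathbf z_{h'})(-u)=\overline{\phi(\mathbf z_{h'},\mathbf z_{h})(u)}$). Writing $u=qp^{n}+u'$ with $0\le u'<p^{n}$ and $0\le q\le p-1$, and splitting the correlation over the $p^{t+1}$ block positions, each source block overlaps under the cyclic shift by $u$ at most two target blocks, namely those at block-offsets $q$ and $q+1$ (the latter arising from the carry when $\hat i+u'\ge p^{n}$); the offset-$q$ overlap contributes $\omega^{g(\mathbf j)-g(\mathbf k_{0})}\,\gamma(\mathbf c_{h}^{s(\mathbf j)},\mathbf c_{h'}^{s(\mathbf k_{0})})(u')$ and the offset-$(q+1)$ overlap contributes $\omega^{g(\mathbf j)-g(\mathbf k_{1})}\,\overline{\gamma(\mathbf c_{h'}^{s(\mathbf k_{1})},\mathbf c_{h}^{s(\mathbf j)})(p^{n}-u')}$, where $s(\cdot)$ keeps the first $t$ digits and $\mathbf k_{0},\mathbf k_{1}$ are the digit-vectors of $\widehat{\mathbf j}+q$ and $\widehat{\mathbf j}+q+1$ reduced modulo $p^{t+1}$. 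Summing over blocks and carrying out the $j_{t+1}$-sum first, the $\gamma$-factors do not see $j_{t+1}$, whereas by the structure of $g$ the phase contains the term $j_{t+1}\,c_{t+1}\bigl(j_{1}-j_{1}^{\star}\bigr)$, with $j_{1}^{\star}$ the first digit of the relevant shifted block index; since $c_{t+1}\ne0$ and $p$ is prime, the complete character sum $\sum_{j_{t+1}\in\mathbb Z_{p}}\omega^{j_{t+1}c_{t+1}(j_{1}-j_{1}^{\star})}$ vanishes whenever $j_{1}\ne j_{1}^{\star}$. A short carry inspection shows $j_{1}^{\star}\ne j_{1}$ for the offset $q+1$ in every admissible case — here one uses $u\le(p-1)p^{n}$, which forces $q\le p-2$ as soon as $u'\ge1$, so $q+1\not\equiv0\pmod p$ — and likewise $j_{1}^{\star}\ne j_{1}$ for the offset $q$ unless $q=0$. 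Thus everything vanishes except the offset-$0$ term in the case $q=0$, which collapses to $p\sum_{s\in\mathbb Z_{p}^{t}}\gamma(\mathbf c_{h}^{s},\mathbf c_{h'}^{s})(u')$; by the CCC property of $\mathscr C$ (\emph{Theorem} \ref{SNC-CCC}) this equals $p^{n+t+1}$ when $h=h'$ and $u'=0$, and $0$ when $h=h'$ with $1\le u'\le p^{n}-1$ or when $h\ne h'$. Assembling the cases gives exactly the $(p^{t},(p-1)p^{n},p^{n+t+1})$-ZCZ correlation profile, which is the periodic-correlation CCC picture alluded to in the Remark.

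The step I expect to be the main obstacle is precisely the carry bookkeeping of the third paragraph: pinning down, for each source block and each decomposition $u=qp^{n}+u'$, which target blocks are hit and with how many samples, and tracking the constant $g(\mathbf j)$ through the digit carries — especially at the boundary blocks with $(j_{1},\dots,j_{t})=(p-1,\dots,p-1)$, where the carry propagates into $j_{t+1}$, and at the edge lags $u'=0$ and $q=p-1$, which are exactly the lags at which the offset-$(q+1)$ term would survive and hence the ones that must be left out of the guaranteed zero zone. Once that combinatorial picture is fixed, the vanishing is the one-line root-of-unity identity supplied by $c_{t+1}\ne0$, and the sole substantive analytic input is the already-established CCC property of $\mathscr C$.
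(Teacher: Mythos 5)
Your proposal is correct and follows essentially the same route as the paper's own proof: decompose $\mathbf{z}_h$ over the $t+1$ appended variables into length-$p^n$ blocks built from the SNC-CCC of \emph{Theorem}~\ref{SNC-CCC}, kill every block pairing whose first appended digit differs via the complete character sum $\sum_{x}\omega^{c_{t+1}x(\,\cdot\,)}=0$ (using $c_{t+1}\neq 0$), and invoke the CCC correlation property for the aligned $q=0$ terms. The only difference is packaging: you track the shift as $u=qp^n+u'$ with explicit carry/offset bookkeeping and aperiodic correlations of the constituent sequences, whereas the paper sums over restrictions $\mathbf{l},\mathbf{l}'$ and appeals to the periodic-correlation CCC remark, which amounts to the same case analysis.
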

\begin{proof}
Let $\mathbf{x}_T=[x_{n+1},x_{n+2},\hdots,x_{n+t+1}]$, $\mathbf{l}=[l_{n+1},l_{n+2},\hdots,l_{n+t+1}],\mathbf{l}'=[l_{n+1}',l_{n+2}',$ $\hdots,l_{n+t+1}']$, $\mathbf{l}_1=[l_{n+1},l_{n+2},\hdots,l_{n+t}]$, $\mathbf{l}'_1=[l'_{n+1},l'_{n+2},\hdots,l'_{n+t}]$. For $0\leq h,h'\leq p^t-1$, the PCCF of $\mathbf{z}_h$ and $\mathbf{z}_{h^{\prime}}$ is given by 
\begin{equation}
\begin{aligned}
 P&= \phi\left(\mathbf{z}_h, \mathbf{z}_{h^{\prime}}\right)(u) \\
&=\phi(f+g+\sum_{r=1}^{t} x_{\pi_r(1)} x_{n+r}+\left.\sum_{r=1}^{t} x_{\pi_r\left(m_r\right)} h_r\right|_{\mathbf{x}_J=\mathbf{c}},\\
&\hspace*{4.5cm}\left.f+g+\sum_{r=1}^{t} x_{\pi_{r(1)}} x_{n+r}+\left.\sum_{r=1}^{t} x_{\pi_r\left(m_r\right)} h_r^{\prime}\right|_{\mathbf{x}_J=\mathbf{c}}\right)(u)\\
&=\sum_{\mathbf{l},\mathbf{l}'}{\omega_p^\delta\phi\left(f+\sum_{r=1}^{t} x_{\pi_r(1)} l_{n+r}+\left.\sum_{r=1}^{t} x_{\pi_r\left(m_r\right)} h_r\right|_{\mathbf{x}_J\mathbf{x}_T=\mathbf{c}\mathbf{l}},\right.}\\
&\hspace*{4.5cm}\left.f+\sum_{r=1}^{t} x_{\pi_{r(1)}} l'_{n+r}+\left.\sum_{r=1}^{t} x_{\pi_r\left(m_r\right)} h_r^{\prime}\right|_{{\mathbf{x}_J\mathbf{x}_T=\mathbf{c}\mathbf{l}'}}\right)(u),
\end{aligned}
\end{equation}
where,
\begin{equation}      
\delta=\sum_{r=2}^{t+1}c_r(l_{n+r}l_{n+1}-l_{n+r}'l_{n+1}')+\hspace{-.3cm}\sum_{2\leq \mu<\nu\leq t}\hspace{-.3cm}{d_{\mu\nu}(l_{n+\mu}l_{n+\nu}-l_{n+\mu}'l_{n+\nu}')}+\sum_{\beta=1}^{t+1}{e_\beta(l_{n+\beta}-l_{n+\beta}')}.
  \end{equation}
  $P$ can further be simplified to,
\begin{equation}
\begin{aligned}
P=&\sum_{\mathbf{l}_1,\mathbf{l}'_1}\omega_p^{\delta'}\sum_{l_{n+t+1},l_{n+t+1}'}\omega_p^{c_{t+1}(l_{n+t+1}l_{n+1}-l_{n+t+1}'l_{n+1}')+e_{t+1}(l_{n+t+1}-l_{n+t+1}')}\\
&\times\phi\left(f+\sum_{r=1}^{t} x_{\pi_r(1)} l_{n+r}+\left.\sum_{r=1}^{t} x_{\pi_r\left(m_r\right)} h_r\right|_{\mathbf{x}_J\mathbf{x}_T=\mathbf{c}\mathbf{l}},\right.\\
&\hspace*{4.5cm}\left.f+\sum_{r=1}^{t} x_{\pi_{r(1)}} l'_{n+r}+\left.\sum_{r=1}^{t} x_{\pi_r\left(m_r\right)} h_r^{\prime}\right|_{{\mathbf{x}_J\mathbf{x}_T=\mathbf{c}\mathbf{l}'}}\right)(u),
\end{aligned}\label{eq:31}
\end{equation}
where,
\begin{equation}      
\delta'=\sum_{r=2}^{t}c_r(l_{n+r}l_{n+1}-l_{n+r}'l_{n+1}')+\hspace{-.3cm}\sum_{2\leq \mu<\nu\leq t}\hspace{-.3cm}{d_{\mu\nu}(l_{n+\mu}l_{n+\nu}-l_{n+\mu}'l_{n+\nu}')}+\sum_{\beta=1}^{t}{e_\beta(l_{n+\beta}-l_{n+\beta}')}.
  \end{equation}
  Now we have two cases 
  \begin{case}[$l_{n+t+1}=l_{n+t+1}'$]
       Let $l_{n+1}=l'_{n+1}$. Since, $0\leq u\leq (p-1)p^n$, therefore, we have $l_{n+r}=l'_{n+r}\ \forall r=2,3,\hdots,t$. Hence, \eqref{eq:31} will reduce to 
       \begin{equation}
       \begin{aligned}
       P&=\sum_{l_{n+t+1}}\sum_{\mathbf{l}_1}\phi\left(f+\sum_{r=1}^{t} x_{\pi_r(1)} l_{n+r}+\sum_{r=1}^{t} x_{\pi_r(m_r)} h_r\right|_{\mathbf{x}_J\mathbf{x}_T=\mathbf{c}\mathbf{l}},\\
       &\hspace*{4cm}\left.\left.f+\sum_{r=1}^{t} x_{\pi_{r(1)}} l_{n+r}+\sum_{r=1}^{t} x_{\pi_r\left(m_r\right)} h_r^{\prime}\right|_{\mathbf{x}_J\mathbf{x}_T=\mathbf{c}\mathbf{l}}\right)(u).
       \end{aligned}
       \end{equation}
       From \emph{Theorem} \ref{SNC-CCC} the value of 
       \begin{equation}
           \begin{aligned}               &\sum_{\mathbf{l}_1}\phi\left(f+\sum_{r=1}^{t} x_{\pi_r(1)} l_{n+r}+\sum_{r=1}^{t} x_{\pi_r(m_r)} h_r\right|_{\mathbf{x}_J\mathbf{x}_T=\mathbf{c}\mathbf{l}},\\
       &\hspace*{4cm}\left.\left.f+\sum_{r=1}^{t} x_{\pi_{r(1)}} l_{n+r}+\sum_{r=1}^{t} x_{\pi_r\left(m_r\right)} h_r^{\prime}\right|_{\mathbf{x}_J\mathbf{x}_T=\mathbf{c}\mathbf{l}}\right)(u)
           \end{aligned}
       \end{equation}
       is zero for all non-zero time shifts. Further, if $l_{n+1}\neq l_{n+1}'$ then \eqref{eq:31} will reduce to
       \begin{equation}
\begin{aligned}
P&=\sum_{\mathbf{l}_1,\mathbf{l}'_1}\omega_p^{\delta'}\sum_{l_{n+t+1}}\hspace*{-.2cm}\omega_p^{c_{t+1}l_{n+t+1}(l_{n+1}-l_{n+1}')}\phi\left(f+\sum_{r=1}^{t} x_{\pi_r(1)} l_{n+r}+\sum_{r=1}^{t} x_{\pi_r(m_r)} h_r\right|_{\mathbf{x}_J\mathbf{x}_T=\mathbf{c}\mathbf{l}},\\
&\hspace*{5cm}\left.\left.f+\sum_{r=1}^{t} x_{\pi_{r(1)}} l_{n+r}+\sum_{r=1}^{t} x_{\pi_r(m_r)} h_r^{\prime}\right|_{\mathbf{x}_J\mathbf{x}_T=\mathbf{c}\mathbf{l}'}\right)(u)\\
&=\sum_{\mathbf{l}_1,\mathbf{l}'_1}\omega_p^{\delta'}\phi\left(f+\sum_{r=1}^{t} x_{\pi_r(1)} l_{n+r}+\sum_{r=1}^{t} x_{\pi_r(m_r)} h_r\right|_{\mathbf{x}_J\mathbf{x}_T=\mathbf{c}\mathbf{l}},\\
&\hspace*{1cm}\left.\left.f+\sum_{r=1}^{t} x_{\pi_{r(1)}} l_{n+r}+\sum_{r=1}^{t} x_{\pi_r(m_r)} h_r^{\prime}\right|_{\mathbf{x}_J\mathbf{x}_T=\mathbf{c}\mathbf{l}'}\right)(u)\sum_{l_{n+t+1}}\omega_p^{c_{t+1}l_{n+t+1}(l_{n+1}-l_{n+1}')}.
\end{aligned}
\end{equation}
Since $\sum_{l_{n+t+1}}\omega_p^{c_{t+1}l_{n+t+1}(l_{n+1}-l_{n+1}')}=0$. Therefore, $P=0$.
       
  \end{case}

  \begin{case}[$l_{n+t+1}\neq l_{n+t+1}'$] For $0\leq u\leq (p-1)p^n$, $l_{n+t+1}-l_{n+t+1}'=-1\mod p$ and $l_{n+1}-l_{n+1}'=-1\mod p$. This implies that $l_{n+t+1}l_{n+t+1}'-l_{n+1}l_{n+1}'=-1-l_{n+t+1}-l_{n+1}$. Now \eqref{eq:31} can be written as
  \begin{equation}
 \begin{aligned} P&=\sum_{\mathbf{l}_1,\mathbf{l}'_1}\omega_p^{\delta'}\sum_{l_{n+t+1}}\omega_p^{c_{t+1}(-1-l_{n+t+1}-l_{n+1})-e_{t+1}}\phi\left(f+\sum_{r=1}^{t} x_{\pi_r(1)} l_{n+r}\right.\\
 &\hspace*{1cm}\left.\left.\left.+\sum_{r=1}^{t} x_{\pi_r(m_r)} h_r\right|_{\mathbf{x}_J\mathbf{x}_T=\mathbf{c}\mathbf{l}},f+\sum_{r=1}^{t} x_{\pi_{r(1)}} l_{n+r}+\sum_{r=1}^{t} x_{\pi_r(m_r)} h_r^{\prime}\right|_{\mathbf{x}_J\mathbf{x}_T=\mathbf{c}\mathbf{l}'}\right)(u)\\
&=\sum_{\mathbf{l}_1,\mathbf{l}'_1}\omega_p^{\delta'-c_{t+1}(1+l_{n+1})-e_{t+1}}\phi\left(f+\sum_{r=1}^{t} x_{\pi_r(1)} l_{n+r}+\sum_{r=1}^{t} x_{\pi_r(m_r)} h_r\right|_{\mathbf{x}_J\mathbf{x}_T=\mathbf{c}\mathbf{l}},\\
&\hspace*{2cm}\left.\left.f+\sum_{r=1}^{t} x_{\pi_{r(1)}} l_{n+r}+\sum_{r=1}^{t} x_{\pi_r(m_r)} h_r^{\prime}\right|_{\mathbf{x}_J\mathbf{x}_T=\mathbf{c}\mathbf{l}'}\right)(u)\sum_{l_{n+t+1}}\omega_p^{-c_{t+1}l_{n+t+1}}.
\end{aligned}
\end{equation}
Since $\sum_{l_{n+t+1}}\omega_p^{-c_{t+1}l_{n+t+1}}=0$, therefore, $P=0$.
  
  \end{case}
\end{proof}

\begin{example}\label{ex1}
    Let $f:\mathbb{Z}_2^3\rightarrow\mathbb{Z}_2$ be an EBF defined as $f(x_1,x_2,x_3)=x_1x_2+x_1x_3+x_1+x_3$. Further, let $J=\{1\}$ such that $V_1=\{x_2\}$ and $V_2=\{x_3\}$. Now, define another EBF $g:\mathbb{Z}_2^3\rightarrow\mathbb{Z}_2$ as $g(x_4,x_5,x_6)=x_4x_5+x_4x_6$. For, $c=1$
    \begin{equation}
        \boldsymbol{\mathscr{Z}}=
        \begin{bmatrix}
            \mathbf{z}_0\\
            \mathbf{z}_1\\
            \mathbf{z}_2\\
            \mathbf{z}_3
        \end{bmatrix}
        =
        \begin{bmatrix}
            f+g+x_2x_4+x_3x_5+x_2\cdot0 +x_3\cdot 0|_{x_1=1}\\
            f+g+x_2x_4+x_3x_5+x_2\cdot1 +x_3\cdot 0|_{x_1=1}\\
            f+g+x_2x_4+x_3x_5+x_2\cdot0 +x_3\cdot 1|_{x_1=1}\\
            f+g+x_2x_4+x_3x_5+x_2\cdot1 +x_3\cdot 1|_{x_1=1}            
        \end{bmatrix}.
    \end{equation}  
\begin{table}[ht]
\caption{SNC-ZCZ sequences corresponding to \emph{Example} \ref{ex1}}
\label{tableSNC-ZCZ}
    \begin{tabular}{|c|l|}
    \hline
       \multirow{4}{*}{$\boldsymbol{\mathscr{Z}}$} &0-1010-1010-10-10-10-10-101010-10-10-101010-1010-10101010101010-10-1010-10-10101  \\ 
        & 0-10-10-10-10-1010-1010-10-101010-101010-10-10-10-10101010101010-10-1010-10-10101\\ 
        & 0-101010-10-10-101010-1010-1010-10-10-10-10-101010-101010-10-1010-1010-10-10-10-10-1\\ 
        &0-10-101010-101010-10-10-10-10-10-1010-1010-10-10101010-10-101010101010-1010-101\\ \hline
    \end{tabular}
    \end{table}    
    Then, $\boldsymbol{\mathscr{Z}}$ is $(4,8,64)$-SNC-ZCZ sequence set as shown in \emph{Table} \ref{tableSNC-ZCZ}
    
\end{example}

We have proposed the construction of SNC-ZCZ sequences. Next, corollary demonstrate the relation between the proposed SNC-ZCZ sequences and traditional ZCZ sequences.
\begin{corollary}
    In \emph{Theorem} \ref{SNC-ZCZ}, let $J$ be an empty set. Then the set $\boldsymbol{\mathscr{Z}}$ is a traditional $(p^t,(p-1)p^n,p^{n+t+1})$-ZCZ sequence set, i.e., $\mid\Omega\mid=p^{n+t+1}$.\label{trad-ZCZ}
\end{corollary}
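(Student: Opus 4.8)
The plan is to read the Corollary as the degenerate instance $k=|J|=0$ of Theorem~\ref{SNC-ZCZ} and to verify that the proof of that theorem is uniform in $k$ down to $k=0$. When $J=\emptyset$ the $p$-ary vector $\mathbf{c}$ is empty, $c=0$, and the restriction operator $|_{\mathbf{x}_J=\mathbf{c}}$ fixes no variable, so it acts as the identity: the ``restricted'' sequence $\psi(\,\cdot\,|_{\mathbf{x}_J=\mathbf{c}})$ is just the ordinary sequence $\psi(\,\cdot\,)$ of length $p^{n+t+1}$, and it contains no zero entries. First I would record this, and in the same breath note that Theorem~\ref{SNC-CCC} specialised to $k=0$ collapses to \emph{Lemma}~\ref{palash-CCC} (the $k=0$ construction quoted from \cite{SaLiMa}), so the CCC ingredient that the proof of Theorem~\ref{SNC-ZCZ} relies on is still available in this case.

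Next I would invoke the correlation computation inside the proof of Theorem~\ref{SNC-ZCZ} with $k=0$. That computation manipulates only the $t+1$ auxiliary variables $x_{n+1},\dots,x_{n+t+1}$, uses the character-sum cancellations $\sum_{l_{n+t+1}}\omega_p^{c_{t+1}l_{n+t+1}(\cdot)}=0$, and calls Theorem~\ref{SNC-CCC} for the inner periodic correlations; at no point does it require $k\ge 1$. Hence the same two-case split (according to whether $l_{n+t+1}=l'_{n+t+1}$) gives $\phi(\mathbf{z}_h,\mathbf{z}_{h'})(u)=0$ throughout the ZCZ region, while $\phi(\mathbf{z}_h,\mathbf{z}_h)(0)=p^{n+t+1}$ since each $\mathbf{z}_h$ now has $p^{n+t+1}$ unimodular entries. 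This already shows that $\boldsymbol{\mathscr{Z}}$ is a $(p^t,(p-1)p^n,p^{n+t+1})$-ZCZ sequence set.

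Finally I would evaluate the support. Setting $J=\emptyset$ in the support formula of Theorem~\ref{SNC-ZCZ} leaves $\Omega=\{\sum_{\alpha=1}^{n+t+1} b_\alpha p^{\alpha-1}:b_\alpha\in\mathbb{Z}_p\}=\{0,1,\dots,p^{n+t+1}-1\}$, so $|\Omega|=p^{n+t+1}$, which equals the length of each $\mathbf{z}_h$. By the definition in Section~\ref{sec2}, $\Omega^c=\emptyset$ means that every $\mathbf{z}_h$ is a traditional (non-SNC) sequence; combined with the ZCZ property just established, $\boldsymbol{\mathscr{Z}}$ is a traditional $(p^t,(p-1)p^n,p^{n+t+1})$-ZCZ sequence set, which is exactly the claim.

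There is no genuine obstacle here: the only point deserving care is that the machinery of Theorem~\ref{SNC-ZCZ} degenerates gracefully at $k=0$ --- that $|_{\mathbf{x}_\emptyset=\mathbf{c}}$ is read as doing nothing, so that the length stays $p^{n+t+1}$, and that the quoted CCC lemma really does supply the $k=0$ base case. Everything else is a verbatim re-use of the preceding theorem.
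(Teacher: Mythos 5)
Your proposal is correct and matches the paper's intent: the paper states this corollary without a separate proof, treating it exactly as you do --- the specialization $J=\emptyset$ (so the restriction is vacuous, the CCC ingredient reduces to the quoted $k=0$ lemma, the correlation argument of Theorem~\ref{SNC-ZCZ} goes through unchanged, and the support formula gives $\Omega=\mathbb{Z}_{p^{n+t+1}}$, i.e.\ a traditional sequence set). No issues.
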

\begin{remark}
    It can be verified that for the larger value of $p$, $N(Z+1)\approx L$. Hence, proposed ZCZ sequences are asymptotically optimal.
\end{remark}

\begin{remark}
    For the fixed $J$ and different $\mathbf{c}$, the support of corresponding SNC-ZCZ sequences is different but have same size. On the other hand, for the different $J$'s, the corresponding SNC-ZCZ sequences have different support set as well as different size. This shows that the proposed SNC-ZCZ sequences have flexible support.
\end{remark}

\section{Relation Between Proposed traditional ZCZ Sequences and GRM Codes}\label{sec4}

This section establishes a relation between constructed ZCZ sequences and GRM codes. For ease of notation, we assume $n+t+1=n'$ and $\mathbf{z}_h$ refers to function instead of sequence as shown in \eqref{zih_2}. Using \emph{Corollary} \ref{trad-ZCZ} we can write 
\begin{equation}\label{zih_2}
    \boldsymbol{\mathscr{Z}}=\left\{\mathbf{z}_{h}=\mathcal{Q}+\sum_{j=1}^{n}{d_{\alpha}x_{\alpha}}
    +\sum_{\beta=1}^{t+1}{e_\beta x_{n+\beta}}+\sum_{r=1}^{t}{x_{\pi_r (m_r)} h_r}+e': h_r \in \mathbb{Z}_p\right\},
\end{equation}
where 
\begin{equation}
      \mathcal{Q}=\underbrace{\sum_{r=1}^{t}{\sum_{j=1}^{m_r-1}{x_{\pi_r(j)}x_{\pi_r(j+1)}}}}_{\mathcal{Q}_1}+\underbrace{\sum_{r=2}^{t+1}{c_r x_{n+r}x_{n+1}}}_{\mathcal{Q}_2}+\hspace{-.15cm}\underbrace{\sum_{2\leq \mu<\nu\leq t}\hspace{-.3cm}{d_{\mu \nu}x_{n+\mu} x_{n+\nu}}}_{\mathcal{Q}_3}\\
      +\underbrace{\sum_{r=1}^{t}{x_{\pi_r(1)}x_{n+r}}}_{\mathcal{Q}_4}.\label{eq:60}
\end{equation}

In the following corollary, we see that ZCZ sequence sets with fixed quadratic form $\mathcal{Q}$, partition the second order coset of the first order GRM code, i.e., $\mathcal{Q} + GRM_p(n',1)$

\begin{corollary}\label{corr1}
Let $\mathcal{Q} + GRM_p(n',1)$ be second order coset of the first order GRM code. Then for a fixed quadratic form $\mathcal{Q}$, $\mathcal{Q} + GRM_p(n',1)$ contains $p^{n'+1-t}$ distinct ZCZ sequence sets. 
\end{corollary}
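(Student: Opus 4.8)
The plan is to exhibit each traditional ZCZ sequence set as a single coset of one fixed $t$-dimensional subspace of the affine code $GRM_p(n',1)$, and then reduce the statement to counting cosets. First I would fix the quadratic form $\mathcal{Q}$, i.e. fix the partition $V_1,\dots,V_t$, the bijections $\pi_r$, and the coefficients $c_r,d_{\mu\nu}$ appearing in $\mathcal{Q}_1,\mathcal{Q}_2,\mathcal{Q}_3,\mathcal{Q}_4$ of \eqref{eq:60}. Reading off \eqref{zih_2}, the only remaining free data is the affine function $\ell=\sum_{\alpha=1}^{n}d_\alpha x_\alpha+\sum_{\beta=1}^{t+1}e_\beta x_{n+\beta}+e'\in GRM_p(n',1)$, while the label $h=(h_1,\dots,h_t)$ ranges over $\mathbb{Z}_p^t$. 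Hence $\mathbf{z}_h=\mathcal{Q}+\ell+\sum_{r=1}^{t}x_{\pi_r(m_r)}h_r$, so that $\boldsymbol{\mathscr{Z}}=\mathcal{Q}+\ell+W$, where $W=\mathrm{span}_{\mathbb{Z}_p}\{x_{\pi_1(m_1)},\dots,x_{\pi_t(m_t)}\}\subseteq GRM_p(n',1)$. This displays $\boldsymbol{\mathscr{Z}}$ as a coset of $W$ lying inside the second-order coset $\mathcal{Q}+GRM_p(n',1)$.

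The key structural step is to verify that $\dim_{\mathbb{Z}_p}W=t$, equivalently $|W|=p^t$. Since $J=\emptyset$ in \emph{Corollary} \ref{trad-ZCZ}, the sets $V_1,\dots,V_t$ partition $\{1,2,\dots,n\}$ and $\pi_r(m_r)\in V_r$; therefore the indices $\pi_1(m_1),\dots,\pi_t(m_t)$ are pairwise distinct, the corresponding degree-one monomials $x_{\pi_1(m_1)},\dots,x_{\pi_t(m_t)}$ are $\mathbb{Z}_p$-linearly independent, and $W$ has dimension exactly $t$. In particular the assignment $(d_\alpha),(e_\beta),e'\mapsto\boldsymbol{\mathscr{Z}}$ is $p^t$-to-one onto the family of these cosets.

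Next I would run the standard coset partition argument. Two choices $\ell,\ell'$ yield the same ZCZ set precisely when $\mathcal{Q}+\ell+W=\mathcal{Q}+\ell'+W$, i.e. when $\ell-\ell'\in W$; otherwise the two cosets of $W$ are disjoint. Every affine function of the $n'=n+t+1$ variables has the form of $\ell$ above (match $d_\alpha$ with the first $n$ linear coefficients, $e_\beta$ with the last $t+1$, and $e'$ with the constant term), so as $\ell$ ranges over $GRM_p(n',1)$ the cosets $\mathcal{Q}+\ell+W$ exhaust $\mathcal{Q}+GRM_p(n',1)$, and by \emph{Corollary} \ref{trad-ZCZ} each one is a genuine traditional ZCZ sequence set. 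Consequently the distinct ZCZ sequence sets inside $\mathcal{Q}+GRM_p(n',1)$ are exactly the cosets of $W$ in $GRM_p(n',1)$, and their number is $|GRM_p(n',1)|/|W|=p^{n'+1}/p^{t}=p^{n'+1-t}$; this also proves the partition statement announced before the corollary.

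I expect the only real obstacle to be the bookkeeping in the first paragraph — carefully matching the parametrization in \eqref{zih_2} against an arbitrary element of $GRM_p(n',1)$ and confirming that letting $(d_\alpha),(e_\beta),e'$ run over $\mathbb{Z}_p$ realizes every affine function, hence every coset of $W$ in the second-order coset. Once that correspondence is pinned down, the dimension count $\dim W=t$ and the final division are routine.
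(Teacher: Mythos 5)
Your proof is correct and follows essentially the same route as the paper: the paper counts the free affine parameters $d_\alpha,e_\beta,e'$ and discards the $t$ coefficients $d_{\pi_r(m_r)}$ absorbed by the shifts $\sum_{r}h_r x_{\pi_r(m_r)}$, which is exactly your count of cosets of $W=\mathrm{span}_{\mathbb{Z}_p}\{x_{\pi_1(m_1)},\dots,x_{\pi_t(m_t)}\}$ inside $GRM_p(n',1)$, giving $p^{n'+1}/p^{t}=p^{n'+1-t}$. Your coset formulation merely makes explicit two points the paper leaves implicit, namely that the indices $\pi_r(m_r)$ are pairwise distinct so $\dim W=t$, and that distinct cosets of $W$ are disjoint so distinct residual parameter choices really do yield distinct ZCZ sequence sets.
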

\begin{proof}
  From equation \eqref{zih_2}, it is clear that for a fixed quadratic form $\mathcal{Q}$, the parameters which contribute to different ZCZ sequence sets are $d_\alpha,~e_\beta,$ and $e'$.  But, whenever $x_{\pi_r(m_r)}=x_\alpha$, then respective $d_\alpha$ will no longer contribute to generate different ZCZ sequence sets because $\sum_{r=1}^{t}x_{\pi_r(m_r)}h_r$ contributes to different sequences in a particular ZCZ sequence sets. Therefore, total variables which contributes to different ZCZ sequence sets are $n'+1-t$. Hence total number of different ZCZ sequence sets in $\mathcal{Q} + GRM_p(n',1)$ equals to $p^{n'+1-t}$.
\end{proof}

It can be obtained that if we have a sequence in $\boldsymbol{\mathscr{Z}}$, then it is a codeword belonging to second-order coset $\mathcal{Q} + GRM_p(n',1)$ where $\mathcal{Q}$ is given by \eqref{eq:60}. Moreover, 
if any codeword belongs to the coset $\mathcal{Q} + GRM_p(n',1)$ then it belongs to a certain $(p^t,(p-1)p^n,p^{n+t+1})$-ZCZ sequence set. Hence ZCZ sequence sets with fixed quadratic form $\mathcal{Q}$, partition $\mathcal{Q} + GRM_p(n',1)$.

 In the following corollary, we explicitly determine the number of second-order cosets representatives having the form $\mathcal{Q}+GRM_p(n',1)$. 
 \begin{corollary}\label{corr2}
     The number of coset representative $\mathcal{Q}$ in the second-order coset of the form $\mathcal{Q}+GRM_p(n',1)$ where $\mathcal{Q}$ is as defined in \eqref{eq:60} is given by 
\begin{equation}
    (p-1)p^{\frac{(t-1)t}{2}}\bigg(\sum_{m_1+m_2+\cdots+m_t=n}{\prod_{r=1}^{t}{\frac{m_{r}!}{2}}}\bigg).
\end{equation}
\end{corollary}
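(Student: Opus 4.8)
The plan is to count the distinct quadratic polynomials $\mathcal{Q}$ of the form \eqref{eq:60} by exploiting the fact that its four summands $\mathcal{Q}_1,\mathcal{Q}_2,\mathcal{Q}_3,\mathcal{Q}_4$ have pairwise disjoint monomial supports: every monomial of $\mathcal{Q}_1$ is an $x_ix_j$ with $1\le i<j\le n$; every monomial of $\mathcal{Q}_2$ and $\mathcal{Q}_3$ is an $x_{n+\mu}x_{n+\nu}$ with both indices in $\{n+1,\dots,n+t+1\}$; every monomial of $\mathcal{Q}_4$ is an $x_{\pi_r(1)}x_{n+r}$ with exactly one index at most $n$; and the monomials of $\mathcal{Q}_2$ all contain $x_{n+1}$ while those of $\mathcal{Q}_3$ contain none. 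First I would make this disjointness explicit and conclude that two parameter choices yield the same $\mathcal{Q}$ if and only if they yield the same $\mathcal{Q}_2+\mathcal{Q}_3$ and the same $\mathcal{Q}_1+\mathcal{Q}_4$. Hence the number of coset representatives factors as (number of distinct $\mathcal{Q}_2+\mathcal{Q}_3$) $\times$ (number of distinct $\mathcal{Q}_1+\mathcal{Q}_4$).

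For the first factor, the free data are the coefficients $c_2,\dots,c_{t+1}$ with $c_{t+1}\neq 0$ and the rest ranging over $\mathbb{Z}_p$, together with the $d_{\mu\nu}$ for $2\le\mu<\nu\le t$, each ranging over $\mathbb{Z}_p$. Since distinct coefficient vectors give distinct polynomials (the associated monomials being all distinct), this factor equals $(p-1)\cdot p^{t-1}\cdot p^{\binom{t-1}{2}}$. Using $\binom{t-1}{2}=\tfrac{(t-1)(t-2)}{2}$ and $(t-1)+\tfrac{(t-1)(t-2)}{2}=\tfrac{t(t-1)}{2}$, this simplifies to $(p-1)\,p^{t(t-1)/2}$.

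For the second factor I would split according to the admissible size vector $(m_1,\dots,m_t)$, $m_1+\cdots+m_t=n$, $m_r\ge 1$. For a fixed size vector, the contribution of block $V_r$ to $\mathcal{Q}_1$ is the Hamiltonian path on the vertices of $V_r$ induced by the ordering $\pi_r(1),\dots,\pi_r(m_r)$, of which there are $\tfrac{m_r!}{2}$ distinct ones (each realized by a traversal and its reverse), and the contribution of $\mathcal{Q}_4$ records the distinguished endpoint $\pi_r(1)$; one then checks that over all size vectors the number of distinct $\mathcal{Q}_1+\mathcal{Q}_4$ comes to $\sum_{m_1+\cdots+m_t=n}\prod_{r=1}^{t}\tfrac{m_r!}{2}$. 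Multiplying by the first factor gives the claimed value $(p-1)\,p^{t(t-1)/2}\bigl(\sum_{m_1+\cdots+m_t=n}\prod_{r=1}^{t}\tfrac{m_r!}{2}\bigr)$.

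I expect the main obstacle to be exactly this last injectivity bookkeeping for $\mathcal{Q}_1+\mathcal{Q}_4$: one must argue precisely which changes to the defining data — the partition of $\{1,\dots,n\}$ into the ordered blocks $V_1,\dots,V_t$ together with the bijections $\pi_r$ — leave $\mathcal{Q}_1+\mathcal{Q}_4$ unchanged, pin down the reversal symmetry as responsible for the factor $\tfrac12$ per block, and rule out accidental coincidences between an edge of $\mathcal{Q}_1$ and an edge of $\mathcal{Q}_4$ (which follows from the index ranges, a $\mathcal{Q}_1$-edge having both endpoints at most $n$ and a $\mathcal{Q}_4$-edge exactly one endpoint exceeding $n$, but this should be stated). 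By contrast, once the disjointness of the monomial supports is in hand, the count of the $c,d$-part is immediate.
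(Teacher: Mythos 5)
Your disjoint-support factorization and your count of the $\mathcal{Q}_2+\mathcal{Q}_3$ part, $(p-1)p^{t-1}p^{\binom{t-1}{2}}=(p-1)p^{t(t-1)/2}$, are fine and agree with the paper. The genuine gap is exactly at the step you deferred with ``one then checks'': the reversal identification does not act on $\mathcal{Q}_1+\mathcal{Q}_4$. If $\pi'_r$ is the reversal of $\pi_r$, then $\mathcal{Q}_1$ is unchanged, but the corresponding $\mathcal{Q}_4$-term changes from $x_{\pi_r(1)}x_{n+r}$ to $x_{\pi_r(m_r)}x_{n+r}$, a different monomial whenever $m_r\geq 2$. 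Indeed, as you yourself observe, $\mathcal{Q}_4$ records the distinguished endpoint, so from $\mathcal{Q}_1+\mathcal{Q}_4$ one recovers the full ordering of each block (path plus starting vertex); hence, for a fixed block structure, the number of distinct $\mathcal{Q}_1+\mathcal{Q}_4$ per block is $m_r!$, not $\tfrac{m_r!}{2}$. Pushed through honestly, your factorization yields $(p-1)p^{t(t-1)/2}\sum_{m_1+\cdots+m_t=n}\prod_{r=1}^{t} m_r!$, i.e.\ $2^t$ times the value in the corollary, so the proposal as written does not arrive at the stated formula.

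The paper's own proof reaches the stated number by a different bookkeeping: it halves at the level of $\mathcal{Q}_1$ alone (counting $\tfrac{m_r!}{2}$ unoriented Hamiltonian-path forms per block, since $\pi_r$ and its reversal give the same $\mathcal{Q}_1$), and then asserts that once $\mathcal{Q}_1$ is fixed, $\mathcal{Q}_2,\mathcal{Q}_3,\mathcal{Q}_4$ are fixed up to the coefficients $c_r,d_{\mu\nu}$ --- that is, it treats the endpoint entering $\mathcal{Q}_4$ as carrying no additional freedom. So the formula in the corollary counts pairs (unoriented path data, coefficient data), not distinct polynomials $\mathcal{Q}_1+\mathcal{Q}_4$ in your sense, and your more careful decomposition exposes this discrepancy rather than resolving it. To close your argument you would have to either justify identifying $\mathcal{Q}$ with its reversal (but these are genuinely different quadratic forms of the shape \eqref{eq:60}, hence different cosets of $GRM_p(n',1)$), or switch to the paper's convention of counting only the $\mathcal{Q}_1$ shapes modulo reversal while suppressing the endpoint choice in $\mathcal{Q}_4$; neither is currently argued, so the key injectivity step you flagged is where the proof fails.
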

\begin{proof}
One can easily see that the two permutations $\pi_r$ and $\pi'_r$, where $\pi'_r(k)=\pi_r(m_r+k-1)$ will generate the same quadratic form. Therefore, we have $\frac{m_{r}!}{2}$ different quadratic forms of the type  
\begin{equation}
  x_{\pi_{r}(1)}x_{\pi_{r}(2)} + x_{\pi_{r}(2)}x_{\pi_{r}(3)} + \cdots+ x_{\pi_{r}(m_r-1)}x_{\pi_{r}(m_{r})},~r=1,2,\hdots t,
\end{equation}
where notations are the same as used in \emph{Lemma} \ref{palash-CCC}.
In addition, we have the condition $m_1 + m_2 + \cdots + m_t = n$
where $1\leq m_r\leq n$. Therefore, we can explicitly determine
    $\sum_{m_1+m_2+\cdots+m_t=n}{\prod_{r=1}^{t}{\frac{m_{r}!}{2}}}$
number of distinct coset representatives of the type ${\mathcal{Q}_1}$. It is to be noted that once we fix ${\mathcal{Q}_1}$ then ${\mathcal{Q}_2},{\mathcal{Q}_3},~\text{and}~ {\mathcal{Q}_4}$ are also fixed except their coefficients, i.e., $c_r$ and $d_{\mu\nu}$ where $c_r,d_{\mu\nu}\in \mathbb{Z}_p$, $c_{t+1}\neq0$. Since ${\mathcal{Q}_2}$ and ${\mathcal{Q}_3}$ contain $t$ and $\frac{(t-2)(t-1)}{2}$ terms respectively and $c_{t+1}$ can take only $p-1$ values. Therefore, We have $(p-1)p^{t-1}p^{\frac{(t-2)(t-1)}{2}}=(p-1)p^{\frac{(t-1)t}{2}}$ number of distinct coset representative of the type ${(\mathcal{Q}_2}+{\mathcal{Q}_3}+{\mathcal{Q}_4})+GRM_p(n',1)$ for the fixed value of ${\mathcal{Q}_1}$. Hence, combining all the discussion, we can conclude that the total number of coset representative of the form $\mathcal{Q}+GRM_p(n',1)$ is 
\begin{equation*}
    (p-1)p^{\frac{(t-1)t}{2}}\bigg(\sum_{m_1+m_2+\cdots+m_t=n}{\prod_{r=1}^{t}{\frac{m_{r}!}{2}}}\bigg).
\end{equation*}
\end{proof}

In the next corollary, it has been concluded that the ZCZ sequence sets lying inside the different cosets of $GRM_p(n',1)$ are different.

\begin{corollary}
   Let $\boldsymbol{\mathscr{Z}}$ and $\boldsymbol{\mathscr{Z}'}$ be two ZCZ sequence sets obtained from \emph{Corollary} \ref{trad-ZCZ}, contained in $\mathcal{Q}+GRM_p(n',1)$ and $\mathcal{Q}'+GRM_p(n',1)$, respectively, where $\mathcal{Q}$ and $\mathcal{Q}'$ are different quadratic forms. Then, $\boldsymbol{\mathscr{Z}}$ and $\boldsymbol{\mathscr{Z}'}$ are different ZCZ sequence sets.   
\end{corollary}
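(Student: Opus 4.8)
The plan is to argue by contradiction, using that $GRM_p(n',1)$ is an additive subgroup (indeed a linear subspace) of the space of reduced polynomials, so its cosets partition that space: two cosets $\mathcal{Q}+GRM_p(n',1)$ and $\mathcal{Q}'+GRM_p(n',1)$ are either identical or disjoint. First I would record the containment already established in the discussion preceding this corollary: by \eqref{zih_2} and \emph{Corollary} \ref{trad-ZCZ}, every member $\mathbf{z}_h$ of $\boldsymbol{\mathscr{Z}}$ equals $\mathcal{Q}$ plus an affine function, so $\mathbf{z}_h\in \mathcal{Q}+GRM_p(n',1)$; likewise every member of $\boldsymbol{\mathscr{Z}'}$ lies in $\mathcal{Q}'+GRM_p(n',1)$. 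Note also that $\boldsymbol{\mathscr{Z}}$ and $\boldsymbol{\mathscr{Z}'}$ are non-empty, each containing $p^t$ sequences.

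Next, suppose toward a contradiction that $\boldsymbol{\mathscr{Z}}=\boldsymbol{\mathscr{Z}'}$ as sets. Pick any $\mathbf{z}_h\in\boldsymbol{\mathscr{Z}}=\boldsymbol{\mathscr{Z}'}$ (viewed as a function, as in this section). Then $\mathbf{z}_h$ lies simultaneously in $\mathcal{Q}+GRM_p(n',1)$ and in $\mathcal{Q}'+GRM_p(n',1)$, so these two cosets share a point and therefore coincide. Equivalently, $\mathcal{Q}-\mathcal{Q}'\in GRM_p(n',1)$, i.e., $\mathcal{Q}-\mathcal{Q}'$ is an affine function of degree at most $1$.

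Finally I would rule this out. By hypothesis $\mathcal{Q}$ and $\mathcal{Q}'$ are distinct quadratic forms of the shape \eqref{eq:60} (distinct as reduced polynomials), and every monomial appearing in such a form is a genuine cross product $x_\mu x_\nu$ with $\mu\neq\nu$; hence the relations $x_i^p=x_i$ never lower the degree of any of these terms, and $\mathcal{Q}-\mathcal{Q}'$ is a non-zero homogeneous polynomial of degree $2$. Such a polynomial is not affine, contradicting $\mathcal{Q}-\mathcal{Q}'\in GRM_p(n',1)$. Therefore $\boldsymbol{\mathscr{Z}}\neq\boldsymbol{\mathscr{Z}'}$.

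The argument is largely bookkeeping once the coset partition is invoked; the one step that genuinely needs care is the last one, namely verifying that two distinct coset representatives of the form \eqref{eq:60} really do fall into distinct cosets of $GRM_p(n',1)$ — this is exactly where one must use that all the quadratic terms $\mathcal{Q}_1,\dots,\mathcal{Q}_4$ are cross terms in distinct variables, so that no collapse of degree occurs over $\mathbb{F}_p$ and the difference of two such forms cannot become affine.
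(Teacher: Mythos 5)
Your proof is correct and follows essentially the same route as the paper: both rest on the fact that cosets of the linear code $GRM_p(n',1)$ are either disjoint or identical, so the nonempty sets $\boldsymbol{\mathscr{Z}}\subset\mathcal{Q}+GRM_p(n',1)$ and $\boldsymbol{\mathscr{Z}'}\subset\mathcal{Q}'+GRM_p(n',1)$ cannot coincide once the two cosets are distinct. If anything, your write-up is slightly more complete than the paper's, which stops at the disjoint-or-equal dichotomy; your last step --- that $\mathcal{Q}-\mathcal{Q}'$ is a nonzero combination of cross terms $x_\mu x_\nu$, $\mu\neq\nu$, so no reduction via $x_i^p=x_i$ occurs and the difference cannot be affine, i.e.\ cannot lie in $GRM_p(n',1)$ --- is exactly the point the paper leaves implicit when it concludes ``which follows the result.''
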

\begin{proof}
   It Suffices to prove that the cosets $\mathcal{Q}+GRM_p(n',1)$ and $\mathcal{Q}'+GRM_p(n',1)$ are either disjoint or same. Let us assume that there exist an element $c$ such that $c\in \mathcal{Q}+GRM_p(n',1) \cap \mathcal{Q}'+GRM_p(n',1)$. Then
   $c=\mathcal{Q}+c_1=\mathcal{Q}'+c_2,$ for some $c_1,c_2\in GRM_p(n',1).$ Which implies $\mathcal{Q}-\mathcal{Q}'=c_2-c_1$. Since, $c_1,c_2\in GRM_p(n',1)$ and $GRM_p(n',1)$ is linear code therefore, $c_2-c_1\in GRM_p(n',1)$, which further implies $\mathcal{Q}-\mathcal{Q}'\in GRM_p(n',1)$. Now for some $w\in GRM_p(n',1)$ let $\mathcal{Q}-\mathcal{Q}'=w$. Then,
   \[\mathcal{Q}+GRM_p(n',1)=\mathcal{Q}'+w+GRM_p(n',1)=\mathcal{Q}'+GRM_p(n',1).\]
   Which follows the result.
\end{proof}

\begin{remark}
  From \emph{Corollary} \ref{corr1}, each cosets of $GRM_p(n',1)$ comprises  $p^{n'+1-t}$ ZCZ sequence sets. Further, from \emph{Corollary} \ref{corr2}, the number of distinct cosets are  $(p-1)p^{\frac{(t-1)t}{2}}\bigg(\sum_{m_1+m_2+\cdots+m_t=n}{\prod_{r=1}^{t}{\frac{m_{r}!}{2}}}\bigg)$. Hence, on combining we obtain $(p-1)p^{\frac{(t-1)t}{2}}\bigg(\sum_{m_1+m_2+\cdots+m_t=n}{\prod_{r=1}^{t}{\frac{m_{r}!}{2}}}\bigg)p^{n'+1-t}$ ZCZ sequence sets. 
\end{remark}
 
\begin{example}
 Let $p=3,n=4,t=2$, $f=x_1x_2+x_3x_4$, and $g=x_5x_6+x_5x_7$. Further, let $\pi_1(1)=1, \pi_1(2)=2,\pi_2(1)=3$, and $\pi_2(2)=4$.
 From \eqref{zih_2},
 \begin{multline}
    \boldsymbol{\mathscr{Z}}=\left\{\mathcal{Q}+d_1x_1+d_3x_3+e_1x_5+e_2x_6+e_3x_7+(d_2+h_1)x_2+(d_4+h_2)x_4\right.\\
    \left.+e':h_1,h_2\in \mathbb{Z}_3\right\}. 
 \end{multline}
 from the above equation, it is clear that for the different values of $d_1,d_3,e_1,e_2,e_3$, and $e'$ the ZCZ sequence set $\boldsymbol{\mathscr{Z}}$ is different. Hence, the coset $\mathcal{Q}+GRM_3(7,1)$ with $\mathcal{Q}=f+g+x_1x_5+x_2x_6$ contains $3^6$ distinct $(3^4,2\cdot3^2,3^7)$-ZCZ sequence sets.
\end{example}

\begin{corollary}
    The minimum Hamming distance of a proposed ZCZ sequence set is at least $(p-1)p^{n'-1}$.
\end{corollary}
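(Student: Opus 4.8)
The plan is to argue entirely within the Reed--Muller picture of Section \ref{sec4}. Fix one of the proposed traditional ZCZ sequence sets $\boldsymbol{\mathscr{Z}}=\{\mathbf{z}_h:h=0,1,\dots,p^t-1\}$. By \emph{Corollary} \ref{trad-ZCZ} and \eqref{zih_2}, each $\mathbf{z}_h$ is the sequence $\psi(\mathcal{Q}+A_h)$, where $\mathcal{Q}$ is the fixed quadratic form in \eqref{eq:60} and $A_h$ is affine in the $n'=n+t+1$ variables, the only $h$-dependent part of $A_h$ being $\sum_{r=1}^{t}x_{\pi_r(m_r)}h_r$. The minimum Hamming distance of the set is $\min_{h\ne h'}d_H(\mathbf{z}_h,\mathbf{z}_{h'})$; since $\psi(g)_i=\omega^{g_i}$, the sequences $\psi(\mathcal{Q}+A_h)$ and $\psi(\mathcal{Q}+A_{h'})$ differ exactly at those coordinates where the $\mathbb{Z}_p$-valued functions $\mathcal{Q}+A_h$ and $\mathcal{Q}+A_{h'}$ differ, i.e.\ where $A_h-A_{h'}\neq 0$. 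Hence $d_H(\mathbf{z}_h,\mathbf{z}_{h'})$ equals the Hamming weight of $\xi(A_h-A_{h'})$.

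The next step is to identify $A_h-A_{h'}=\sum_{r=1}^{t}x_{\pi_r(m_r)}(h_r-h'_r)$ as a \emph{nonzero homogeneous linear form} whenever $h\ne h'$. Nonvanishing follows because $h\mapsto(h_1,\dots,h_t)$ is a bijection, so $(h_r)\neq(h'_r)$, and because the variables $x_{\pi_r(m_r)}$, $r=1,\dots,t$, are pairwise distinct: the blocks $V_r$ partition $\{1,\dots,n\}$ and $\pi_r(m_r)\in V_r$. Thus $A_h-A_{h'}$ is a nonzero codeword of the first-order code $GRM_p(n',1)$ with zero constant term.

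It then remains to record that a nonzero homogeneous linear form $\ell:\mathbb{Z}_p^{n'}\to\mathbb{Z}_p$ is balanced: picking a variable $x_\alpha$ with nonzero (hence invertible) coefficient and varying it shows $\ell$ attains each value of $\mathbb{Z}_p$ exactly $p^{n'-1}$ times, so $\xi(\ell)$ has Hamming weight $p^{n'}-p^{n'-1}=(p-1)p^{n'-1}$. Equivalently, this is the minimum distance of $GRM_p(n',1)$, which also follows from \emph{Lemma} \ref{l0} with $\lambda=p$, $m=n'$, $r=1$: dividing $n'(p-1)-1$ by $p-1$ gives quotient $Q=n'-1$ and remainder $R=p-2$, so the minimum distance is $(R+1)p^{Q}=(p-1)p^{n'-1}$. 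Combining, $d_H(\mathbf{z}_h,\mathbf{z}_{h'})=(p-1)p^{n'-1}$ for all $h\neq h'$, hence the minimum Hamming distance of the proposed ZCZ sequence set is at least (indeed exactly) $(p-1)p^{n'-1}$.

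I do not expect a genuine obstacle here: the only care needed is the bookkeeping in the second step — injectivity of $h\mapsto(h_r)$ together with distinctness of the indices $\pi_r(m_r)$ — which guarantees that the pairwise difference is a nonzero linear form; once that is in place, the balancedness of a nonzero linear form (equivalently, the known minimum distance of the first-order generalized Reed--Muller code) closes the argument immediately.
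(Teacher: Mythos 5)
Your argument is correct, and it takes a somewhat more direct route than the paper. The paper's proof stays at the structural level: it invokes \emph{Lemma}~\ref{l0} to get that the minimum distance of $GRM_p(n',1)$ is $(p-1)p^{n'-1}$, observes (via \emph{Corollary}~\ref{corr1}) that $\boldsymbol{\mathscr{Z}}$ sits inside the single coset $\mathcal{Q}+GRM_p(n',1)$, and then uses the general fact that a coset of a linear code has the same minimum distance as the code, so any subset has minimum distance at least that value. You instead compute the pairwise differences explicitly from \eqref{zih_2}: for $h\neq h'$ the two functions differ by the nonzero homogeneous linear form $\sum_{r=1}^{t}x_{\pi_r(m_r)}(h_r-h'_r)$ (nonzero because $h\mapsto(h_1,\dots,h_t)$ is bijective and the indices $\pi_r(m_r)$ lie in disjoint blocks $V_r$), and such a form is balanced, so every pairwise distance equals $(p-1)p^{n'-1}$ exactly. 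This buys you a sharper conclusion (equality, not just a lower bound) and makes the proof self-contained — you need neither the coset-distance principle nor, strictly, \emph{Lemma}~\ref{l0}, which you use only as a cross-check — whereas the paper's version is shorter and reuses the coset machinery it has already set up; both arguments ultimately rest on the same fact that differences of sequences in $\boldsymbol{\mathscr{Z}}$ are nonzero elements of $GRM_p(n',1)$.
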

\begin{proof}   
Using \emph{Lemma} \ref{l0}, it can be calculated that the minimum Hamming distance of $GRM_p(n',1)$ is $(p-1)p^{n'-1}$. From \emph{Corollary} \ref{corr1}, each ZCZ sequence set $\boldsymbol{\mathscr{Z}}$ is a subset of a second order coset of the first order GRM code, i.e., $\mathcal{Q} + GRM_p(n',1)$. Now, Since the coset of a code has the same minimum Hamming distance as that of code. Therefore, minimum Hamming distance of $\mathcal{Q} + GRM_p(n',1)$ is $(p-1)p^{n'-1}$. Hence, minimum Hamming distance of $\boldsymbol{\mathscr{Z}}$ is at least $(p-1)p^{n'-1}$.  
\end{proof}

\section{Comparison with Existing Works}\label{sec5}

In this section, we compare our proposed construction of traditional/SNC-ZCZ sequences with some existing works. A detailed comparison of traditional ZCZ sequences can be seen in Table \ref{table2}.

\subsection{Comparison with Traditional ZCZ Sequences in \cite{ChWu} and \cite{LiGuPa}}
In \cite{ChWu}, Chen \emph{et al.} proposed ZCZ sequence sets having parameters $(2^k,2^{m-k-1},2^m)$, where $k\ \text{and}\ m$ are positive integers such that 
$m\geq 2,$ and $k\leq m-1$. In our proposed construction, if we take $p=2,~t=k,$ and $n=m-k-1$, then the parameters of our construction meet the parameters in \cite{ChWu}. Moreover, in \cite{LiGuPa}, authors provided ZCZ sequence sets having parameters $(2^{k+1},2^n,2^{n+k+2})$. Construction in \cite{LiGuPa} was based on GBFs. Again, these parameters also appear as a special case of our parameters, i.e., if we take $p=2$ and $t=k+1$ in our proposed construction, then we get the parameters of \cite{LiGuPa}. Hence, our construction covers more lengths and can be seen as a generalization of these constructions.

\begin{sidewaystable}
\caption{Comparison with \cite{TaFa},\cite{ChWu},\cite{Th}, \cite{LiGuPa}, \cite{ZhDz}, \cite{ZhPa}, and \cite{TaYu}}
    \begin{tabular}{llllll}
    \hline
       \textbf{Reference}  & \textbf{Based on} & \textbf{Parameter ($N$,$Z,L$)} & \textbf{Direct/Indirect} &
       \textbf{Optimality}\\ \hline
        \cite{TaFa} & MOGCS & $(2^n,Z_{cz}+1,2^{n+1}Z_{cz}),\ n \geq 1$ & Indirect & Optimal for binary case\\
        \hline
        
        \cite{ChWu} &  GBF &  {\color{black}$(2^k,2^{m-k-1},2^{m}),$ $m\geq 2,$ and $k\leq m-1$.} &  Direct &  Optimal for binary case\\ \midrule
        
        \cite{Th} & PS\footnotemark[1] & \makecell[l]{$(2(2n+1),4k+1,4(2k+1)(2n+1)),$ \\ $ n\geq 1,\ k\geq 1$} & Indirect & Neither optimal nor almost-optimal\\\hline

        \cite{LiGuPa} & GBF & $(2^{k+1},2^n,2^{n+k+2}),k\geq 0,n\geq 1$ & Direct & Optimal for binary case\\ \hline
        \cite{ZhDz} & PNLF\footnotemark[2] & $(p,p-1,p^2)$, $p$ is an odd prime &  Direct & Optimal\\ \hline
        \cite{ZhPa} & GBnF\footnotemark[3] & $(N,N-1,N^2)$, $N$ is positive integer. &  Direct & Optimal.\\
        \hline
        
        Theorem 1 & EBF & \makecell[l]{${(p^t,(p-1)p^n,p^{n+t+1})},\ n,t\geq 1,$ \\$ p$ is prime and $t\leq n$} & Direct & \makecell[l]{Optimal for binary case else\\ asymptotically optimal}\\\hline
    \end{tabular}\label{table2}
    \footnotetext[1]{Perfect Sequences.}
    \footnotetext[2]{Perfect Non-linear functions.}
    \footnotetext[3]{Generalised Bent functions.}
\end{sidewaystable}
    

\subsection{Comparison with Traditional ZCZ Sequences in \cite{ZhDz} and \cite{ZhPa}}
In \cite{ZhDz} and \cite{ZhPa}, parameters of constructed ZCZ sequence sets appear to be $(p,p,p^2)$ and $(N, N, N^2)$ respectively, where $p$ is an odd prime, and $N$ is a positive integer. In both constructions, the length parameter is fixed, i.e., none of them can produce a ZCZ sequence of length $p^3$. Hence flexibility in the length parameter is not much here.  Also, once  the length parameter is fixed, it is not possible to play with ZCZ width and set size, i.e., fixing of length parameter fixes ZCZ width and set size. On the other hand, in our proposed construction, the length parameter is $p^{n+t+1}$, i.e., the length in the form of $p^k$, where $k\geq 3$ can be achieved. Also, for the fixed value of $n+t,~n\geq1$, and $t\leq n$, more than one combinations of $n$ and $t$ exist. Hence, the ZCZ width and set size can be altered with the fixed length.
Thus our construction is more flexible, as it can generate ZCZ sequences of prime-powers length, and also, it is direct.

\subsection{Comparison with Traditional ZCZ Sequences in \cite{pai2022optimal}}

Recently, as a generalization of the proposed work, a new construction of $(b^k,(b-1)b^{m-k-1}+(b-2)b^{m-k-2},b^m)$-ZCZ sequences using EGBFs has been reported in \cite{pai2022optimal}. The parameter of ZCZ sequence set in \cite{pai2022optimal} is better than ours. But, based on the record of the publication date of \cite{pai2022optimal}, the proposed work \cite{NiSu} has appeared online in Arxive earlier date.

\subsection{Comparison with SNC-ZCZ sequences in \cite{li2022spectrally}}
In \cite{li2022spectrally}, the authors proposed the construction of $(N,N,(N+e))$-SNC ZCZ sequence set, where $N$ and $e$ are positive integers. in our construction, we proposed $(p^t,(p-1)p^n,p^{n+t+1})$-SNC-ZCZ sequence set. In comparison to \cite{li2022spectrally}, the proposed construction offers following benefits
\begin{itemize}
    \item The ZCZ and the number of sequences are dependent (as they are the same) in \cite{li2022spectrally}, while we don't have this type of constraint from the proposed construction, i.e., they are independent. 
    \item The alphabet size of SNC-ZCZ sequences in \cite{li2022spectrally} is $lcm(N, N+e)$, which is larger than the alphabet size of our proposed construction, i.e., $p+1$. For example, the SNC-ZCZ of length $27$ that can be constructed from \cite{li2022spectrally} has parameter $(3,3,27)$ and alphabet size $9$, while for the same length, our construction can produce $(3,6,27)$-SNC-ZCZ with alphabet size $3$.
\end{itemize}

\subsection{Comparison with SNC-ZCZ sequences in \cite{ye2022new}}
In \cite{ye2022new}, authors presented construction of SNC-ZCZ sequences with parameter $(\Tilde{F}(N),N+1,N(N+1))$, where for each positive integer $N\geq 2$, $\Tilde{F}(N)$ denotes the maximum number of rows such that an $\Tilde{F}(N)\times N$ circular Florentine rectangle (CFR) exist. In comparison to  \cite{ye2022new}, our construction gives following benefits,
\begin{itemize}
    \item The construction in \cite{ye2022new} is heavily depends on the existence of CFR, which makes this construction very restricted. While the proposed construction is based on EBFs which are well known in the existing literature.
    \item The parameters of SNC-ZCZ in \cite{ye2022new} are interdependent (as they are all dependent on $N$), which restrict the flexibility of parameters. On the other hand, our construction is much more flexible. 
\end{itemize}
The construction in \cite{ye2022new} is heavily depends on the existence of CFR. which makes this construction very limited
\section{Conclusion}\label{sec13}

In this paper, we introduce a construction method for SNC-ZCZ sequences of prime-power length based on EBFs having flexible support. The proposed SNC-ZCZ sequences are asymptotically optimal. Additionally, we establish a relationship between traditional ZCZ sequences and GRM codes.

\section*{Declaration}
\textbf{Conflict of interest} The authors report no conflict of interest.\\
\textbf{Data availability} Not Applicable.
\bibliography{main}
\end{document}